\font\bb=msbm10 at 12pt
\newcommand{\mathbb}[1]{\hbox{\bb #1}} 
\newtheorem{theorem}{Theorem}[section]
\newtheorem{lemma}{Lemma}[section]
\begin{document}

\title[Perturbed quantum graphs with rationally related edges]
{Resonances from perturbations of quantum graphs with~rationally related edges}

\author{Pavel Exner}
\address{Doppler Institute for Mathematical Physics and Applied
Mathematics, \\ Czech Technical University, B\v{r}ehov\'{a} 7,
11519 Prague, \\ and  Nuclear Physics Institute ASCR, 25068
\v{R}e\v{z} near Prague, Czechia} \ead{exner@ujf.cas.cz}

\author{Ji\v{r}\'{\i} Lipovsk\'{y}}
\address{Institute of Theoretical Physics, Faculty of Mathematics and
Physics, \\ Charles University, V Hole\v{s}ovi\v{c}k\'ach 2, 18000
Prague, \\ and  Nuclear Physics Institute ASCR, 25068
\v{R}e\v{z} near Prague, Czechia} \ead{lipovsky@ujf.cas.cz}

\begin{abstract}
We discuss quantum graphs consisting of a compact part and
semiinfinite leads. Such a system may have embedded eigenvalues if
some edge lengths in the compact part are rationally related. If
such a relation is perturbed these eigenvalues may turn into
resonances; we analyze this effect both generally and in simple
examples.
\end{abstract}

\maketitle

\section{Introduction}

Quantum graphs have attracted a lot of attention recently. The
reason is not only that they represent a suitable model for
various microstructures, being thus of a direct practical value,
but also that they are an excellent laboratory to study a variety
of quantum effects. This comes from a combination of two features.
On one hand these models are mathematically accessible since the
objects involved are ordinary differential operators. On the other
hand graphs may exhibit a rich geometrical and topological
structure which influences behaviour of quantum particle for which
such a graph is a configuration space. There is nowadays a huge
literature on quantum graphs and, instead of presenting a long
list of references, we restrict ourselves to mentioning the review
papers \cite{Ku, Ku2} as a guide to further reading.

One important property of quantum graphs is that --- in contrast
to usual Schr\"odinger operators --- the unique continuation
principle is in general not valid for them: they can exhibit
eigenvalues with compactly supported eigenfunctions even if the
graph extends to infinity. This property is closely connected with
the fact that eigenvalues embedded in the continuous spectrum are
on quantum graphs by far less exceptional then for usual
Schr\"odinger operators. A typical situation when this happens is
when the graph contains a loop consisting of edges with rationally
related lengths and the eigenfuction has zeros at the
corresponding vertices, which prevents it from ``communicating''
with the rest of the graph.

On the other hand, since such an effect leans on rational
relations between the edge lengths, it is unstable with respect to
perturbations which change these ratios. The resolvent poles
associated with the embedded eigenvalues do not disappear under
such a geometric perturbation, though, and one can naturally
expect that they move into the second sheet of the complex energy
surface producing resonances. The aim of the present paper is to
discuss this effect in a reasonably general setting.

We consider a graph consisting of a compact ``inner''part to which
a finite number of semiifinite leads are attached. We assume a
completely general coupling of wavefunctions at the graph vertices
consistent with the self-adjointness requirement. As a preliminary
we will show, generalizing the result of \cite{EL}, that we can
speak about resonances without further adjectives because the
resolvent and scattering resonances coincide in the present case.
We also show how the problem can be rephrased on the compact graph
part only by introducing an effective, energy-dependent coupling.

After that we formulate general conditions under which such a
quantum graph possesses embedded eigenvalues in terms of the graph
geometry (edge lengths) and the matrix of coupling parameters. The
discussion of the behaviour of embedded eigenvalues is opened by a
detailed analysis of two simple examples, those of a ``loop'' and
a ``cross'' resonator graphs. Here we can analyze not only the
effect of small length perturbations but also, using numerical
solutions, to find the global pole behaviour and to illustrate
several different types of it. Returning to the general analysis
in the closing section, we will derive conditions under which the
eigenvalues remain embedded, and show that ``nothing is lost at
the perturbation'' in the sense that the number of poles,
multiplicity taken into account, is preserved.

\section{Preliminaries}\label{sec2}

\subsection{A universal setting for graphs with leads}
\label{flower}

Let us consider a graph $\Gamma$ consisting of a set of vertices
$\mathcal{V}=\{\mathcal{X}_j: j\in I\}$, a set of finite edges
$\mathcal{L} =\{\mathcal{L}_{jn}:  (\mathcal{X}_j, \mathcal{X}_n)
\in I_\mathcal{L} \subset I\times I\}$ and a set of infinite edges
${\mathcal L_\infty} = \{\mathcal{L}_{j\infty}:  \mathcal{X}_j \in
I_\mathcal{C}\}$ attached to them. We regard it as a configuration
space of a quantum system with the Hilbert space
 $$
   \mathcal{H} = \bigoplus_{L_{j} \in \mathcal{L}} L^2([0,l_{j}])\oplus
\bigoplus_{\mathcal{L}_{j\infty} \in \mathcal{L_\infty}}
L^2([0,\infty)).
 $$
the elements of which can be written as columns $\psi  = (f_{j}:
\mathcal{L}_j \in \mathcal{L},\, g_{j}: \mathcal{L}_{j\infty}\in
\mathcal{L}_\infty)^T$. We consider the dynamics governed by a
Hamiltonian which acts as $-\mathrm{d}^2/\mathrm{d} x^2$ on each
link. In order to make it a self-adjoint operator, boundary
conditions
 \begin{equation}
   (U_j -I) \Psi_j +i (U_j +I) \Psi_j' = 0\,\label{rat-coup1}
 \end{equation}
with unitary matrices $U_j$ have to be imposed  at the vertices
$\mathcal{X}_j$, where $\Psi_j$ and $\Psi_j'$ are vectors of the
functional values and of the (outward) derivatives at the
particular vertex, respectively. In other words, the domain of the
Hamiltonian consists of all functions in
$W^{2,2}(\mathcal{L}\oplus\mathcal{L}_\infty)$ which satisfy the
conditions (\ref{rat-coup1}). We will speak about the described
structure as of a \emph{quantum graph} and as long as there is no
danger of misunderstanding we will use for simplicity the symbol
$\Gamma$ again.

While the model is simple dealing with a complicated graph may be
nevertheless cumbersome. To make it easier we will employ a trick
mentioned to our knowledge for the first time in \cite{Ku2}
passing to a graph $\Gamma_0$ in which all edge ends meet
in a single vertex as sketched in Fig.~\ref{fig_flower}; the
actual topology of $\Gamma$ will be then encoded into the matrix
which describes the coupling in the vertex.

To be more specific, suppose that $\Gamma$ described above has an
adjacency matrix $C_{ij}$ and that matrices $U_j$ describe the
coupling between vectors of functional values $\Psi_j$ and
derivatives $\Psi_j'$ at $\mathcal{X}_j$. This will correspond to
the ``flower-like'' graph with one vertex, the set of loops
isomorphic to $\mathcal{L}$ and the set of semiinfinite links
$\mathcal{L}_\infty$ which does not change; coupling at the only
vertex of this graph is given by a ``big'' unitary matrix $U$.

Denoting $N=\mathrm{card}\,\mathcal{L}$ and
$M=\mathrm{card}\,\mathcal{L_\infty}$ we introduce the
$(2N+M)$-dimensional vector of functional values by $\Psi =
(\Psi_1^T, \dots ,\Psi_{\mathrm{card}\,\mathcal{V}}^T)^T$ and
similarly the vector of derivatives $\Psi'$ at the vertex. The
valency of this vertex is $M +\sum_{i,j} C_{ij}= 2N+M$. One can
easily check that the conditions (\ref{rat-coup1}) can be
rewritten on $\Gamma_0$ using one $(2 N+ M)\times (2 N+ M)$
unitary block diagonal matrix $U$ consisting of blocks $U_j$ as
 \begin{equation}
   (U -I) \Psi +i (U+I) \Psi' = 0\,;\label{rat-coup2}
 \end{equation}
the equation (\ref{rat-coup2}) obviously decouples into the set of
equations (\ref{rat-coup1}) for $\Psi_j$ and $\Psi'_j$.

Since neither the edge lengths and the corresponding Hilbert
spaces nor the operator action on them are affected and the only
change is a possible edge renumbering the quantum graph $\Gamma_0$
is related to the original $\Gamma$ by the natural unitary
equivalence and the spectral properties we are interested in are
not affected by the model modification.

\begin{figure}
   \begin{center}
     \includegraphics{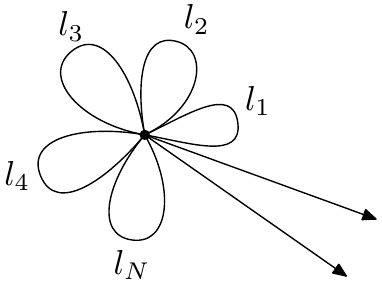}
     \caption{The model $\Gamma_0$ for a quantum graph $\Gamma$ with $N$
     internal finite edges and $M$ external links}\label{fig_flower}
   \end{center}
\end{figure}

\subsection{Equivalence of the scattering and resolvent resonances} \label{equiv}

As another preliminary we need a few facts about resonances on
quantum graphs. In \cite{EL} we studied the situation where to
each vertex of a compact graph at most one external semi-infinite
link is attached; we have demonstrated that the resonances may be
equivalently understood as poles of the analytically continued
resolvent, $(H - \lambda \,\mathrm{id})^{-1}$, or of the on-shell
scattering matrix. Here we extend the result to all quantum graphs
with finite number of edges, both finite and semi-infinite: we
will show that the resolvent and scattering resonances again
coincide.The above described ``flower-like'' graph model allows us
to give an elegant proof of this claim.

Let us begin with the resolvent resonances. As in \cite{EL} the
idea is to employ an exterior complex scaling; this seminal idea
can be traced back to the work J.-M.~Combes and coauthors,
cf.~\cite{AC}, and its use in the graph setting is particularly
simple. Looking for complex eigenvalues of the scaled operator we
do not change the compact-graph part: using the Ansatz $f_j (x) =
a_j \sin{kx} + b_j \cos{kx}$ on the internal edges we obtain
 \begin{eqnarray}
    f_j (0) = b_j, &\quad &f_j (l_j)= a_j \sin{kl_j}
    + b_j \cos{kl_j},\label{rat-fce}\\
    f_j' (0) = k a_j, &\quad -&f_j' (l_j)= - k a_j \cos{kl_j}
    + k b_j \sin{kl_j},\label{rat-der}
 \end{eqnarray}
hence we have
 \begin{eqnarray}
   \left(\begin{array}{c}f_j(0)\\ f_j(l_j)\end{array}\right) &=
   \left(\begin{array}{cc}0 & 1\\ \sin{kl_j}&\cos{kl_j}\end{array}\right)
   \left(\begin{array}{c}a_j\\ b_j\end{array}\right)\,,\label{rat-int1}
   \\
   \left(\begin{array}{c}f_j'(0)\\ -f_j'(l_j)\end{array}\right) &=
   k \left(\begin{array}{cc}1 & 0\\ -\cos{kl_j}&\sin{kl_j}\end{array}\right)
   \left(\begin{array}{c}a_j\\ b_j\end{array}\right)\,. \label{rat-int2}
 \end{eqnarray}
On the other hand, the functions on the semi-infinite edges are
scaled by $g_{j\theta}(x) = \mathrm{e}^{\theta/2} g_j(x
\mathrm{e}^\theta)$ with an imaginary~$\theta$ rotating the
essential spectrum of the transformed (non-selfadjoint)
Hamiltonian into the lower complex halfplane so that the poles of
the resolvent on the second sheet become ``uncovered'' if the
rotation angle is large enough. The argument is standard, both generally
and in the graph setting \cite{EL}, so we skip the details. In
particular, the ``exterior'' boundary values are given by
 \begin{equation}
   g_j (0) = \mathrm{e}^{-\theta/2} g_{j\theta},\quad g_j' (0)
   = ik \mathrm{e}^{-\theta/2} g_{j\theta}.\label{rat-ext}
 \end{equation}

Now we substitute eqs. (\ref{rat-int1}), (\ref{rat-int2}) and
(\ref{rat-ext}) into (\ref{rat-coup2}). We rearrange the terms in
$\Psi$ and $\Psi'$ in such a way that the functional values
corresponding to the two ends of each edge are neighbouring, and
the entries of the matrix $U$ are rearranged accordingly. This
yields
 \begin{equation}
   (U-I)C_1(k)\left(\begin{array}{c}a_1\\b_1\\a_2\\\vdots\\b_N\\
   \mathrm{e}^{-\theta/2}g_{1\theta}\\\vdots\\
   \mathrm{e}^{-\theta/2}g_{M\theta}\end{array}\right)
   +ik(U+I)C_2(k)\left(\begin{array}{c}a_1\\b_1\\a_2\\\vdots\\b_N\\
   \mathrm{e}^{-\theta/2}g_{1\theta}\\\vdots\\
   \mathrm{e}^{-\theta/2}g_{M\theta}\end{array}\right)=0,\label{rat-scal}
 \end{equation}
where the matrices $C_1$, $C_2$ are given by $C_1 (k)  = \mathrm{diag\,}(C_1^{(1)}(k),C_1^{(2)}(k),\dots,C_1^{(N)}(k), I_{M\times M})$ and $C_2  = \mathrm{diag\,}(C_2^{(1)}(k),C_2^{(2)}(k),\dots,C_2^{(N)}(k), i I_{M\times M})$, respectively, where
 \begin{eqnarray*}
   C_1^{(j)} (k) =
   \left(\begin{array}{cc}
   0 & 1 \\
   \sin{kl_j} & \cos{kl_j}\end{array}\right)\,,\qquad
   C_2^{(j)} (k) =
   \left(\begin{array}{cc}
   1 & 0  \\
   -\cos{kl_j} & \sin{kl_j}\end{array}\right)
 \end{eqnarray*}
and $I_{M\times M}$ is a $M\times M$ unit matrix.

The solvability condition of the system (\ref{rat-scal})
determines the eigenvalues of scaled non-selfadjoint operator, and
\emph{mutatis mutandis} the poles of the analytically continued
resolvent of the original graph Hamiltonian.

The other standard approach to resonances is to study poles of the
on-shell scattering matrix, again in the lower complex halfplane.
In our particular case we choose a combination of two planar
waves, $g_j = c_j \mathrm{e}^{-ikx}+ d_j \mathrm{e}^{ikx}$, as an
Ansatz on the external edges; we ask about poles of the matrix
$S=S(k)$ which maps the vector of amplitudes of the incoming waves
$c=\{c_n\}$ into the vector of the amplitudes of the outgoing
waves $d =\{d_n\}$ by $d=Sc$. The condition for the scattering
resonances is then $\mathrm{det}\,S^{-1}=0$ for appropriate
complex values of $k$. The functional values and derivatives at
the vertices are now given by
 $$
   g_j (0) = c_j + d_j,\quad g_j' (0) = i k (d_j - c_j).
 $$
together with eqs. (\ref{rat-fce})--(\ref{rat-der}). After
substituting into (\ref{rat-coup2}) one arrives at the condition
 $$
   (U-I)C_1(k)\left(\begin{array}{c}a_1\\b_1\\a_2\\
     \vdots\\b_N\\c_1+d_1\\\vdots\\ c_M+d_M\end{array}\right)
   +ik(U+I)C_2(k)\left(\begin{array}{c}a_1\\b_1\\a_2\\
     \vdots\\b_N\\d_1-c_1\\\vdots\\ d_M-c_M\end{array}\right)=0.
 $$
Since we are interested in zeros of $\mathrm{det}\,S^{-1}$, we
regard the previous relation as an equation for variables $a_j$,
$b_j$ and $d_j$ while $c_j$ are just parameters, in other words
 $$
   [(U-I)C_1(k) +ik(U+I)C_2(k)]\left(\begin{array}{c}a_1\\b_1\\a_2\\
     \vdots\\b_N\\d_1\\\vdots\\ d_M\end{array}\right)
   =[-(U-I)C_1(k)+ik(U+I)C_2(k)]\left(\begin{array}{c}0\\0\\
     \vdots\\0\\c_1\\\vdots\\c_M\end{array}\right)\,.
 $$
Eliminating the variables $a_j$, $b_j$ one can derive from here a
system of $M$ equations expressing the map $S^{-1}d =c$. The
condition under which the previous system is not solvable, what is
equal to $\mathrm{det}\,S^{-1}=0$, reads
 \begin{equation}
   \mathrm{det}\,\left[(U-I)\,C_1(k)+ik (U+I)\,C_2(k)\right]=0 \label{rat-res}
 \end{equation}
being the same as the condition of solvability of the system
(\ref{rat-scal}); this means that the families of resonances
determined in the two ways coincide.

\subsection{Effective coupling on the finite graph}

The study of resonances can be further simplified by reducing it
to a problem on the compact subgraph only. The idea is to replace
the coupling at the vertex where external semi-infinite edges are
attached by an effective one obtained by eliminating the external
variables. Substituting from (\ref{rat-ext}) into eqs.
(\ref{rat-coup2}) we get
 \begin{equation}
  \hspace{-17mm}(U-I)\left(\begin{array}{c}f_1\\\vdots\\f_{2N}\\\mathrm{e}^{-\theta/2}g_{1\theta}\\
  \vdots\\\mathrm{e}^{-\theta/2}g_{M\theta}\end{array}\right)
  + (U+I)\,\mathrm{diag}(i,\dots,i,-k,\dots,-k) \left(\begin{array}{c}f_1'\\
  \vdots\\f_{2N}'\\\mathrm{e}^{-\theta/2}g_{1\theta}\\\vdots\\\mathrm{e}^{-\theta/2}g_{M\theta}
  \end{array}\right)=0\,.\label{rat-ffg}
 \end{equation}
We consider now $U$ as a matrix consisting of four blocks,
$U=\left(\begin{array}{cc}U_1 & U_2 \\U_3 & U_4
\end{array}\right)$, where $U_1$ is the $2N\times 2N$ square
matrix referring to the compact subgraph, $U_4$ is the $M\times M$
square matrix related to the exterior part, and $U_2$ and $U_3$
are rectangular matrices of the size $M \times 2N$ and $2N \times
M$, respectively, connecting the two. Then the previous set of
equations can be written as
 $$
    V (f_1,\dots,f_{2N},f_1',\dots,f_{2N}',\mathrm{e}^{-\theta/2}g_{1\theta},
    \dots,\mathrm{e}^{-\theta/2}g_{M\theta})^\mathrm{T} = 0,
 $$
where
 $$
   V = \left(\begin{array}{ccc}U_1 -I & i (U_1 + I)& (1-k) U_2\\
   U_3 & i U_3 & (1-k) U_4 - (k+1)I\end{array}\right)\,.
 $$
If the matrix $[(1-k)U_4-(k+1)]$ is regular, one obtains from here
 $$
   (\mathrm{e}^{-\theta/2}g_{1\theta},\dots,\mathrm{e}^{-\theta/2}g_{M\theta})^\mathrm{T}
   = -[(1-k)U_4 - (k+1) I]^{-1}U_3 (f_1+if_1',\dots,f_{2N}+if_{2N}')^\mathrm{T}
 $$
and substituting it further into (\ref{rat-ffg}) we find that the
following expression,
 \begin{eqnarray*}
   \left\{U_1-I -(1-k)U_2 [(1-k)U_4-(k+1)I]^{-1}U_3\right\}(f_1,\dots,f_{2N})^\mathrm{T}+
\\
   +i\left\{U_1+I-(1-k)U_2
   [(1-k)U_4-(k+1)I]^{-1}U_3\right\}(f_1',\dots,f_{2N}')^\mathrm{T}=0\,.
 \end{eqnarray*}
must vanish. Consequently, elimination of the external part leads
to an effective coupling on the compact part of the graph
expressed by the condition
 $$
   (\tilde U(k)-I)(f_1,\dots,f_{2N})^\mathrm{T}
   + i (\tilde U(k)+I)(f_1',\dots,f_{2N}')^\mathrm{T} = 0\,,
 $$
where the corresponding coupling matrix
 \begin{equation} \label{en-dep}
   \tilde U(k)=U_1 -(1-k) U_2 [(1-k) U_4 - (k+1) I]^{-1} U_3 \label{rat-efu}
 \end{equation}
is obviously energy-dependent and, in general, may not be unitary.

\section{Embedded eigenvalues for graphs with rationally related edges}

As mentioned in the introduction, quantum graphs of the type we
consider here have the positive halfline as the essential
spectrum, and they may have eigenvalues with compactly supported
eigenfunctions embedded in it.

\subsection{A general result}

We will focus on graphs which contain several internal edges of
lengths equal to integer multiples of a fixed $l_0 > 0$. In the
spirit of the previous section we restrict ourselves only to
compact graphs remembering that the presence of an exterior part
can be rephrased through an effective energy-dependent coupling
replacing the original $U$ by the matrix $\tilde U(k)$ defined
above.

Following Sec.~\ref{flower} we model a given compact $\Gamma$ by
$\Gamma_0$ having only one vertex and $N$ finite edges emanating
from this vertex and ending at it. The coupling between the edges
is described by a $2N \times 2N$ unitary matrix $U$ and condition
(\ref{rat-coup2}). Suppose that the lengths of the first $n$ edges
are integer multiples of a positive real number $l_0$. Our aim is
to find out for which matrices $U$ the spectrum of the
corresponding Hamiltonian $H=H_U$ contains the eigenvalues $k = 2m
\pi /l_0$, $m\in \mathbb{N}$.

Since our graph is not directed it is convenient to work in a setting
invariant with respect to interchange of the edge ends. To this
aim we choose the Ansatz
 $$
   \Psi_j (x) = A_j \sin{k(x - l_j/2)}+ B_j \cos{k(x - l_j/2)}\,.
 $$
on the $j$-th edge. Subsequently, one gets
 \begin{eqnarray*}
   \left(\begin{array}{c}\Psi_j(0)\\\Psi_j(l_j)\end{array}\right) &=&
   \left(\begin{array}{cc}-\sin{\frac{kl_j}{2}}&\cos{\frac{kl_j}{2}}\\
   \sin{\frac{kl_j}{2}}&\cos{\frac{kl_j}{2}}\end{array}\right)
   \left(\begin{array}{c}A_j \\ B_j\end{array}\right)\,,
\\
   \left(\begin{array}{c}\Psi_j'(0)\\-\Psi_j'(l_j)\end{array}\right) &=&
   k \left(\begin{array}{cc}\cos{\frac{kl_j}{2}}&\sin{\frac{kl_j}{2}}\\
   -\cos{\frac{kl_j}{2}}&\sin{\frac{kl_j}{2}}\end{array}\right)
   \left(\begin{array}{c}A_j \\ B_j\end{array}\right)\,.
 \end{eqnarray*}
The eigenvalue condition, expressed in terms of solvability of the
system (\ref{rat-coup2}), is given by
 \begin{equation}
   \mathrm{det}\,[U D_1(k) + D_2(k)] = 0\,, \label{rat-con-l2}
 \end{equation}
where
 {\scriptsize
 $$
   D_1(k) = \left(\begin{array}{ccccc}
   -\sin{\frac{kl_1}{2}}+ik\cos{\frac{kl_1}{2}} & \cos{\frac{kl_1}{2}}+ik\sin{\frac{kl_1}{2}} &  \cdots &0 &0\\
   \sin{\frac{kl_1}{2}}-ik\cos{\frac{kl_1}{2}} & \cos{\frac{kl_1}{2}}+ik\sin{\frac{kl_1}{2}} &  \cdots &0 &0\\
   \vdots &\vdots &\ddots &\vdots &\vdots\\
   0 &0 & \cdots& -\sin{\frac{kl_N}{2}}+ik\cos{\frac{kl_N}{2}} & \cos{\frac{kl_N}{2}}+ik\sin{\frac{kl_N}{2}} \\
   0 &0 & \cdots&\sin{\frac{kl_N}{2}}-ik\cos{\frac{kl_N}{2}} & \cos{\frac{kl_N}{2}}+ik\sin{\frac{kl_N}{2}}
   \end{array}\right)\,,
 $$
 $$
   D_2(k) = \left(\begin{array}{ccccc}
   \sin{\frac{kl_1}{2}}+ik\cos{\frac{kl_1}{2}} & -\cos{\frac{kl_1}{2}}+ik\sin{\frac{kl_1}{2}} &  \cdots &0 &0\\
   -\sin{\frac{kl_1}{2}}-ik\cos{\frac{kl_1}{2}} & -\cos{\frac{kl_1}{2}}+ik\sin{\frac{kl_1}{2}} &  \cdots &0 &0\\
   \vdots &\vdots &\ddots &\vdots &\vdots\\
   0 &0 & \cdots& \sin{\frac{kl_N}{2}}+ik\cos{\frac{kl_N}{2}} & -\cos{\frac{kl_N}{2}}+ik\sin{\frac{kl_N}{2}} \\
   0 &0 & \cdots&-\sin{\frac{kl_N}{2}}-ik\cos{\frac{kl_N}{2}} & -\cos{\frac{kl_N}{2}}+ik\sin{\frac{kl_N}{2}}
   \end{array}\right)\,.
 $$

 }
For a future purpose, let us rewrite the spectral condition
(\ref{rat-con-l2}) in the form $\mathrm{det}\,(C(k)+S(k))=0$,
where the matrix $C(k)$ contains terms with $\cos{\frac{kl_j}{2}}$
and $S(k)$ contains those with $\sin{\frac{kl_j}{2}}$. Hence all
the entries in the first $2n$ columns of $S(k)$ vanish for $k=
2m\pi/l_0$, $m\in \mathbb{N}$ while the others can be nontrivial.
Similarly, all the entries in the first $2n$ columns of $C(k)$ are
for $k = (2m+1)\pi/l_0$, $m\in \mathbb{N}$ equal to zero. The
entries of the ``cosine'' matrix are
 \begin{eqnarray*}
   C_{i,2j-1}(k) = (u_{i,2j-1} - u_{i,2j}) ik \cos{\frac{kl_j}{2}}+
   (\delta_{i,2j-1}-\delta_{i,2j})ik \cos{\frac{kl_j}{2}}\,,\\
   C_{i,2j}(k) = (u_{i,2j-1} + u_{i,2j}) \cos{\frac{kl_j}{2}}-
   (\delta_{i,2j-1}+\delta_{i,2j})\cos{\frac{kl_j}{2}}\,.
 \end{eqnarray*}
Similarly, the entries of $S(k)$ are
 \begin{eqnarray*}
   S_{i,2j-1}(k) = (-u_{i,2j-1} + u_{i,2j}) \sin{\frac{kl_j}{2}}+
   (\delta_{i,2j-1}-\delta_{i,2j})\sin{\frac{kl_j}{2}}\,,\\
   S_{i,2j}(k) = (u_{i,2j-1} + u_{i,2j})  ik \sin{\frac{kl_j}{2}}+
   (\delta_{i,2j-1}+\delta_{i,2j})ik\sin{\frac{kl_j}{2}}\,.
 \end{eqnarray*}

First of all, let us consider the situation when $\sin{kl_0/2}=0$.

\begin{theorem}\label{veta1}
Let a graph $\Gamma_0$ consist of a single vertex and $N$ finite
edges emanating from this vertex and ending at it, and suppose
that the coupling between the edges is described by a $2N \times
2N$ unitary matrix $U$ and condition (\ref{rat-coup2}). Let the
lengths of the first $n$ edges be integer multiples of a positive real
number $l_0$. If the rectangular $2N \times 2n$ matrix
 \begin{equation}
   \hspace{-18mm}M_{\mathrm{even}} = \left(\begin{array}{ccccccc}
   u_{11}& u_{12}-1&u_{13}&u_{14}&\cdots&u_{1,2n-1}&u_{1,2n}\\
   u_{21}-1& u_{22}&u_{23}&u_{24}&\cdots&u_{2,2n-1}&u_{2,2n}\\
   u_{31}& u_{32}&u_{33}&u_{34}-1&\cdots&u_{3,2n-1}&u_{3,2n}\\
   u_{41}& u_{42}&u_{43}-1&u_{44}&\cdots&u_{4,2n-1}&u_{4,2n}\\
   \vdots&\vdots&\vdots&\vdots&\ddots&\vdots&\vdots\\
   u_{2N-1,1}& u_{2N-1,2}&u_{2N-1,3}&u_{2N-1,4}&\cdots&u_{2N-1,2n-1}&u_{2N-1,2n}\\
   u_{2N,1}& u_{2N,2}&u_{2N,3}&u_{2N,4}&\cdots&u_{2N,2n-1}&u_{2N,2n}
   \end{array}\right)\label{rat-matice}
 \end{equation}
has rank smaller than $2n$ then the spectrum of the corresponding
Hamiltonian $H=H_U$ contains eigenvalues of the form $\epsilon = 4 m^2 \pi^2
/l_0^2$ with $m\in\mathbb{N}$ and the multiplicity of these
eigenvalues is at least the difference between $2n$ and the rank of
$M_{\mathrm{even}}$.
\end{theorem}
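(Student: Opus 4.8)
The plan is to build, for every $m\in\mathbb{N}$, a family of eigenfunctions of $H_U$ at energy $\epsilon=4m^2\pi^2/l_0^2$ which are supported only on the first $n$ edges, so that the vertex condition effectively decouples from the rest of $\Gamma_0$. First I would fix $m$ and set $k:=k_m=2m\pi/l_0$. Since $l_j=p_jl_0$ with $p_j\in\mathbb{N}$ for $j\le n$, we have $k_ml_j/2=mp_j\pi$, hence $\sin(k_ml_j/2)=0$ and $\cos(k_ml_j/2)=(-1)^{mp_j}\neq0$ for those edges. In the splitting $UD_1(k)+D_2(k)=C(k)+S(k)$ used after (\ref{rat-con-l2}), the first $2n$ columns of $S(k)$ contain only the factors $\sin(kl_j/2)$ with $j\le n$, so they vanish at $k=k_m$; consequently the first $2n$ columns of the matrix in the spectral condition (\ref{rat-con-l2}) reduce at $k=k_m$ to the first $2n$ columns of $C(k_m)$.

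Next I would read off those columns from the explicit formulas for $C_{i,2j-1},C_{i,2j}$: up to the nonzero scalar $ik_m(-1)^{mp_j}$ the $(2j-1)$-st column equals $(U_{\cdot,2j-1}-U_{\cdot,2j})+(e_{2j-1}-e_{2j})$, and up to the nonzero scalar $(-1)^{mp_j}$ the $(2j)$-th column equals $(U_{\cdot,2j-1}+U_{\cdot,2j})-(e_{2j-1}+e_{2j})$, where $U_{\cdot,p}$ is the $p$-th column of $U$ and $e_p$ the $p$-th canonical vector of $\mathbb{C}^{2N}$. Comparing with $M_{\mathrm{even}}$ in (\ref{rat-matice}), whose $(2j-1)$-st and $(2j)$-th columns are $U_{\cdot,2j-1}-e_{2j}$ and $U_{\cdot,2j}-e_{2j-1}$, one checks that within each index pair $\{2j-1,2j\}$ the column pair just obtained is the ``difference, sum'' of the corresponding pair of $M_{\mathrm{even}}$. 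Since rescaling columns by nonzero numbers and replacing a column pair by its sum and difference are invertible column operations, the $2N\times2n$ matrix formed by the first $2n$ columns of $C(k_m)$ has the same kernel dimension as $M_{\mathrm{even}}$, namely $d:=2n-\mathrm{rank}\,M_{\mathrm{even}}$.

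It then remains to turn kernel vectors into eigenfunctions. Assuming $d\ge1$, I would pick $d$ linearly independent vectors in $\mathbb{C}^{2n}$ annihilated by the first $2n$ columns of $C(k_m)$ and extend each by zeros to a vector $(A_1,B_1,\dots,A_N,B_N)^{\mathrm T}$ with $A_j=B_j=0$ for $j>n$; by the first paragraph this solves the homogeneous system (\ref{rat-coup2}) at $k=k_m$. The corresponding Ansatz function $\Psi_j(x)=A_j\sin k_m(x-l_j/2)+B_j\cos k_m(x-l_j/2)$ is supported on the first $n$ edges, lies in the operator domain, satisfies $-\Psi''=k_m^2\Psi$ edgewise, and fulfils the vertex condition by construction, hence is an eigenfunction of $H_U$ for $\epsilon=4m^2\pi^2/l_0^2$. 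Since on each edge $\sin k_m(\cdot-l_j/2)$ and $\cos k_m(\cdot-l_j/2)$ are linearly independent ($k_m\neq0$), the map from the coefficient vector to $\Psi$ is injective, so the $d$ chosen kernel vectors produce $d$ linearly independent eigenfunctions and the multiplicity of $\epsilon$ is at least $d$.

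\textbf{Main obstacle.} The vanishing of the relevant columns of $S(k_m)$, the dimension count, and the injectivity of the coefficient-to-function map are all routine. The one step that genuinely needs care is the identification in the second paragraph: verifying that the ``shifted identity'' terms contributed by $D_2(k)$ assemble, after the pairwise sum and difference, into exactly the displaced pattern of $-1$'s in (\ref{rat-matice}), and that the per-edge prefactors $ik_m(-1)^{mp_j}$ and $(-1)^{mp_j}$ are nonzero so that passing to $M_{\mathrm{even}}$ does not change the kernel dimension.
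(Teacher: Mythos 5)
Your proposal is correct and follows essentially the same route as the paper: at $k=2m\pi/l_0$ the first $2n$ columns of $S(k)$ vanish, and invertible column operations (rescaling by the nonzero factors $\cos(kl_j/2)$, $ik$ and taking pairwise sums and differences) turn the corresponding columns of $C(k)$ into those of $M_{\mathrm{even}}$, so rank deficiency yields the spectral condition (\ref{rat-con-l2}). Your additional step of converting kernel vectors supported on the first $2n$ coordinates into explicit eigenfunctions living on the rationally related edges is a sound tightening of the multiplicity claim, which the paper establishes only through the brief rank argument.
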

\begin{proof}
The condition (\ref{rat-con-l2}) is clearly satisfied if the
rectangular matrix containing only the first $2n$ columns has rank
smaller than $2n$, because then some of the columns of matrix
$C(k)+ S(k)$ are linearly dependent. Since all the entries of the
first $2n$ columns of $S(k)$ contain the term $\sin{kl_j/2}$,
which disappear for $kl_0 = 2m \pi$, one can consider the matrix
$C(k)$ only. Dividing some of the columns of $C(k)$ by appropriate
nonzero terms, which is possible since $\cos{kl_j/2}\not = 0$ for
$\sin{kl_0/2} = 0$, and subtracting them from each other does not
change the rank of the matrix. This argument shows that the rank
of matrix $M_{\mathrm{even}}$ must be smaller than $2n$ in order
to yield a solution of the condition (\ref{rat-con-l2}) and that
the multiplicity is given by the difference.
\end{proof}

It is important to notice that the unitarity of $U$ played no role
in the argument, and consequently, one can obtain in this way
embedded eigenvalues $\epsilon = 4 m^2 \pi^2 /l_0^2$ for a graph
containing external links, however, the matrix
$M_{\mathrm{even}}(k)$ defined in analogy with (\ref{rat-matice})
must have rank smaller than $2n$ for \emph{all} values of $k$.

Mathematically speaking the described case does not involve only
cases where the original graph $\Gamma$ contains a loop with
rational rate of the lengths of the edges. Choosing appropriate
$U$ one can find such eigenvalues also for graphs where the edges
of $\Gamma$ with lengths equal to integer multiples of $l_0$ are
not adjacent. This corresponds, however, to couplings allowing the
particle to ``hop'' between different vertices which is not so
interesting from the point of view of the underlying physical
model.

A similar claim can be made also for $kl_0$ equal to odd multiples
of $\pi$.
\begin{theorem}\label{veta2}
If under the same assumptions as above, the rectangular $2N \times
2n$ matrix
 \begin{equation}
   \hspace{-18mm}M_{\mathrm{odd}} = \left(\begin{array}{ccccccc}
   u_{11}& u_{12}+1&u_{13}&u_{14}&\cdots&u_{1,2n-1}&u_{1,2n}\\
   u_{21}+1& u_{22}&u_{23}&u_{24}&\cdots&u_{2,2n-1}&u_{2,2n}\\
   u_{31}& u_{32}&u_{33}&u_{34}+1&\cdots&u_{3,2n-1}&u_{3,2n}\\
   u_{41}& u_{42}&u_{43}+1&u_{44}&\cdots&u_{4,2n-1}&u_{4,2n}\\
   \vdots&\vdots&\vdots&\vdots&\ddots&\vdots&\vdots\\
   u_{2N-1,1}& u_{2N-1,2}&u_{2N-1,3}&u_{2N-1,4}&\cdots&u_{2N-1,2n-1}&u_{2N-1,,2n}\\
   u_{2N,1}& u_{2N,2}&u_{2N,3}&u_{2N,4}&\cdots&u_{2N,2n-1}&u_{2N,2n}
   \end{array}\right)\,,
 \end{equation}
has rank smaller than $2n$ then the spectrum of the corresponding
Hamiltonian $H=H_U$ contains eigenvalues of the form $\epsilon = (2m+1)^2
\pi^2 /l_0^2$ with $m\in\mathbb{N}$ and the multiplicity of these
eigenvalues is at least the difference between $2n$ and rank of
$M_{\mathrm{odd}}$.
\end{theorem}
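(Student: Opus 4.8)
The plan is to follow the proof of Theorem~\ref{veta1}, interchanging the roles of the ``sine'' and ``cosine'' parts of the spectral determinant. As there, the condition (\ref{rat-con-l2}), written in the form $\mathrm{det}\,(C(k)+S(k))=0$, is automatically satisfied whenever the $2N\times 2n$ matrix formed by the first $2n$ columns of $C(k)+S(k)$ has rank smaller than $2n$: a nontrivial linear relation among these columns, interpreted through the Ansatz as a choice of the amplitudes $A_j,B_j$ on the first $n$ edges with $A_j=B_j=0$ for $j>n$, produces a compactly supported eigenfunction, and linearly independent relations give linearly independent eigenfunctions, which yields the asserted lower bound on the multiplicity.

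The key point is the one recorded just before the theorem: at $k=(2m+1)\pi/l_0$ every entry in the first $2n$ columns of $C(k)$ vanishes, because each carries a factor $\cos\frac{kl_j}{2}$ that is zero for such $k$, whereas the factors $\sin\frac{kl_j}{2}$ in the corresponding columns of $S(k)$ equal $\pm1$. Hence on those columns only the $S(k)$-part survives. First I would divide, for $j=1,\dots,n$, the $(2j-1)$-th column by $\sin\frac{kl_j}{2}$ and the $2j$-th column by $ik\sin\frac{kl_j}{2}$ (both nonzero since $k\neq0$); this preserves the rank and, by the explicit formulas for $S_{i,2j-1}$ and $S_{i,2j}$, reduces the pair of columns attached to the $j$-th edge to the vectors with entries $-u_{i,2j-1}+u_{i,2j}+\delta_{i,2j-1}-\delta_{i,2j}$ and $u_{i,2j-1}+u_{i,2j}+\delta_{i,2j-1}+\delta_{i,2j}$.

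It then remains to replace, for each $j$, the $(2j-1)$-th of these by one half of the difference of the $2j$-th and the $(2j-1)$-th column and the $2j$-th by one half of their sum, which is once more rank-preserving. The first combination has entries $u_{i,2j-1}+\delta_{i,2j}$, the second $u_{i,2j}+\delta_{i,2j-1}$, i.e.\ precisely the two columns of $M_{\mathrm{odd}}$ attached to the $j$-th edge, with the $+1$'s in the ``crossed'' positions (row $2j$ in column $2j-1$, row $2j-1$ in column $2j$) as displayed. Doing this for all $j\le n$ turns the first $2n$ columns of $C(k)+S(k)$ at $k=(2m+1)\pi/l_0$ into $M_{\mathrm{odd}}$ without changing their rank, so these columns are dependent exactly when the rank of $M_{\mathrm{odd}}$ is smaller than $2n$, and the defect $2n-\mathrm{rank}\,M_{\mathrm{odd}}$ is a lower bound for the multiplicity of $\epsilon=(2m+1)^2\pi^2/l_0^2$.

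The whole argument is routine linear bookkeeping, and the only step I expect to need real care --- the one where a misplaced sign would be fatal --- is tracking the Kronecker-delta contributions through the two successive sets of column operations, so that the $+1$ entries of $M_{\mathrm{odd}}$, rather than the $-1$ entries of $M_{\mathrm{even}}$, end up in the correct places. As in Theorem~\ref{veta1}, unitarity of $U$ plays no role, so the same reasoning also produces embedded eigenvalues $\epsilon=(2m+1)^2\pi^2/l_0^2$ for graphs carrying external links, provided the analogue of $M_{\mathrm{odd}}$ has rank below $2n$ for \emph{all} $k$.
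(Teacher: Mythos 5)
Your proposal is correct and follows essentially the route the paper itself indicates for Theorem~\ref{veta2} (and skips): at $k=(2m+1)\pi/l_0$ only the $S(k)$ part of the first $2n$ columns survives, and your column divisions by $\sin\frac{kl_j}{2}$ and $ik\sin\frac{kl_j}{2}$ followed by the half-sum/half-difference operations correctly produce the columns $u_{i,2j-1}+\delta_{i,2j}$ and $u_{i,2j}+\delta_{i,2j-1}$, i.e.\ $M_{\mathrm{odd}}$ with the $+1$'s in the crossed positions, which is exactly the proof of Theorem~\ref{veta1} with the roles of $C(k)$ and $S(k)$ interchanged. Your extra remark identifying kernel vectors with compactly supported eigenfunctions only makes explicit the multiplicity count already implicit in the paper's argument, so there is nothing substantively different.
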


We skip the proof which is similar to the previous one, the change
being that the roles of the matrices $S(k)$ and $C(k)$ are
interchanged. We also notice that similarly as above the results
extends to graphs with semi-infinite external edges.

\subsection{A loop with $\delta$ or $\delta'_\mathrm{s}$  couplings}

\begin{figure}
   \begin{center}
     \includegraphics{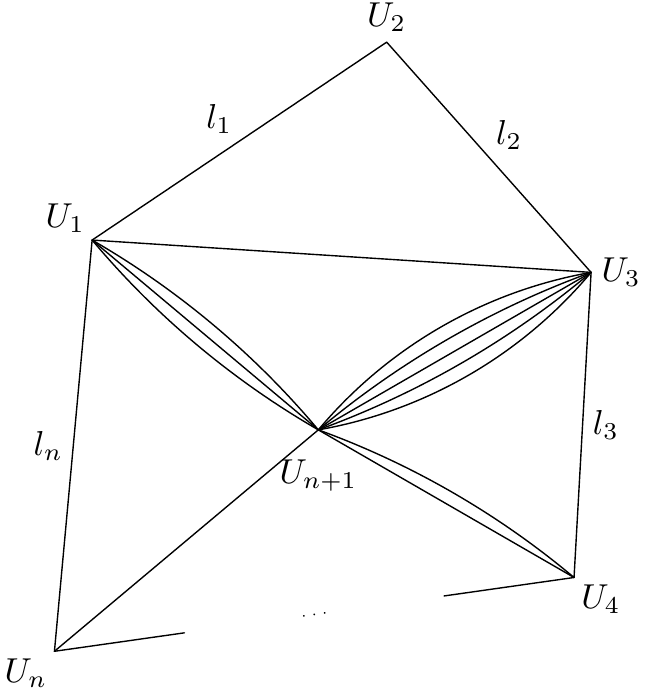}
     \caption{A loop of the edges with rational rate of their lengths}\label{fig-loopdelta}
   \end{center}
\end{figure}

As mentioned above a prime example of embedded eigenvalues in the
considered class of quantum graphs concerns the situation when $\Gamma$
contains a subgraph in the form of a loop of $n$ edges with the
lengths equal to integer multiples of $l_0$. We denote by $U_j,\,
j=1,\dots n$, the unitary matrices describing the coupling at the
vertices of such a loop and by $U_{n+1}$ the unitary matrix which
describes the coupling at all the other vertices of the graph ---
cf.~Fig.~\ref{fig-loopdelta}. The unitary matrix which describes
the coupling on the whole graph, in the sense explained in
Sec.~\ref{flower}, is
 $$
    U = \left(\begin{array}{cccc} U_1& 0 & \cdots & 0\\
    0 & U_2 & \cdots & 0\\
    \vdots & \vdots & \ddots & \vdots\\
    0 & 0 & \cdots & U_{n+1}
    \end{array}\right)\,.
 $$
Let us further restrict our attention to the case when the
coupling in the loop vertices is \emph{invariant with respect
to the permutation of edges}, i.e. suppose that matrices $U_1$,
\dots, $U_n$ can be written as $U_j = a_j J + b_j I$, where $I$
is a unit matrix, $J$ is a matrix with all entries equal to one
and $a_j$ and $b_j$ are complex numbers satisfying $|b_j|=1$ and
$|b_j+a_j\mathrm{deg\,}\mathcal{X}_j|=1$ to make the operator
self-adjoint --- cf.~\cite{ET}.

Recall that in order to use Theorems~\ref{veta1} and \ref{veta2}
one has to rearrange the columns and rows of the unitary matrix
$U$ accordingly. The first $2n$ entries in $\Psi$ and $\Psi'$
correspond to the edges with rational rates of their lengths.
Therefore, appropriate permutations of columns and rows of $U$
must be performed: the first two columns should correspond to the first
edge of the loop (from the vertex 1 to 2), the second two columns
to the second edge, etc. The rearranged coupling matrix is thus
$\left(\begin{array}{cc}U_{\mathrm{r}}&0\\0&U_{n+1}\end{array}\right)$
with

  {\scriptsize
 $$
    U_{\mathrm{r}} = \left(\begin{array}{c@{}c@{}c@{}c@{}c @{}c@{}c
           @{}c@{}c@{}c@{}c@{}c@{}c@{}c@{}c@{}c}
    a_1 +b_1 & 0 & 0& \cdots & 0 & 0 & a_1& a_1 & \cdots & a_1& 0 &
      \cdots & 0& 0 & \cdots & 0\\
    0 &  a_2 +b_2 & a_2 & \cdots &0 & 0 & 0& 0 & \cdots & 0& a_2 &
      \cdots & a_2& 0 & \cdots & 0\\
    0 & a_2 & a_2+ b_2\,\,& \cdots &0 & 0 & 0& 0 & \cdots & 0& a_2 &
      \cdots & a_2& 0 & \cdots & 0\\
    \vdots & \vdots& \vdots&  \ddots &\vdots &\vdots & \vdots& \vdots&
      \ddots &\vdots & \vdots&  \ddots &\vdots & \vdots&  \ddots &\vdots \\
    0 & 0 & 0 &  \cdots &\,\,a_n+b_n & a_n&0& 0 & \cdots & 0& 0 &
      \cdots & 0& a_n & \cdots & a_n\\
    0 & 0 & 0 &  \cdots &a_n & a_n+b_n&0& 0 & \cdots & 0& 0 &
      \cdots & 0& a_n & \cdots & a_n\\
    a_1 & 0 & 0 & \cdots &0 &0& a_1+b_1& a_1 & \cdots & a_1& 0 &
      \cdots & 0& 0 & \cdots & 0\\
    a_1 &  0 & 0&  \cdots &0 & 0 & a_1& a_1+b_1\,\, & \cdots & a_1& 0 &
      \cdots & 0& 0 & \cdots & 0\\
   \vdots & \vdots& \vdots&  \ddots &\vdots &\vdots & \vdots& \vdots&
      \ddots &\vdots & \vdots&  \ddots &\vdots & \vdots&  \ddots &\vdots \\
    a_1 &  0 & 0&  \cdots &0 & 0 & a_1& a_1 & \cdots & \,\,a_1+b_1& 0 &
      \cdots & 0& 0 & \cdots & 0\\
    0 & a_2 & a_2&\cdots & 0 & 0 & 0& 0 & \cdots & 0& a_2+b_2 \,\,&
      \cdots & a_2& 0 & \cdots & 0\\
    \vdots & \vdots& \vdots&  \ddots &\vdots &\vdots & \vdots& \vdots&
      \ddots &\vdots & \vdots&  \ddots &\vdots & \vdots&  \ddots &\vdots \\
    0 & a_2 & a_2&\cdots & 0 & 0 & 0& 0 & \cdots & 0& a_2 &
      \cdots & \,\,a_2+b_2& 0 & \cdots & 0\\
    \vdots & \vdots& \vdots&  \ddots &\vdots &\vdots & \vdots& \vdots&
      \ddots &\vdots & \vdots&  \ddots &\vdots & \vdots&  \ddots &\vdots \\
    0 & 0& 0 &  \cdots& a_n & a_n & 0& 0 & \cdots & 0& 0 &
      \cdots & 0& a_n+b_n \,\,& \cdots & a_n\\
    \vdots & \vdots& \vdots&  \ddots &\vdots &\vdots & \vdots& \vdots&
      \ddots &\vdots & \vdots&  \ddots &\vdots & \vdots&  \ddots &\vdots \\
    0 & 0& 0 &  \cdots& a_n & a_n & 0& 0 & \cdots & 0& 0 &
      \cdots & 0& a_n & \cdots & \,\,a_n+b_n
    \end{array}\right)\,.
 $$
 }
The corresponding matrix $M_{\mathrm{even}}$ is constructed in the way
described in the previous section. It consists of a nontrivial $2n
\times 2n$ part ($U_{\mathrm{r}}$ with added -1's) and
$\mathrm{deg\,}\mathcal{X}_1 - 2$ copies of
the row $(a_1,0,\dots,0,a_1)$, $\mathrm{deg\,}\mathcal{X}_2 - 2$
copies of the row $(0,a_2,a_2,0\dots,0)$, etc., and, finally, its
last $\mathrm{deg\,}\mathcal{X}_{n+1}$ rows have all the entries
equal to zero, hence the total number of its rows is $2N$ as
required.

If all the $a_j$'s are nonzero, the condition
$\mathrm{rank}\,M_{\mathrm{even}} < 2n$ simplifies to
 $$
   \mathrm{rank}\,\left(\begin{array}{cccccccc}
    b_1 & -1 & 0 & 0 & \cdots & 0 & 0 & 0\\
    -1 & b_2 & 0 & 0 & \cdots & 0 & 0 & 0\\
    0 & 0 & b_2 & -1 & \cdots & 0 & 0 & 0\\
    0 & 0 & -1 & b_3 & \cdots & 0 & 0 & 0\\
    \vdots & \vdots & \vdots & \vdots & \ddots & \vdots & \vdots & \vdots\\
    0 & 0 & 0 & 0 & \cdots & b_n & 0 & 0\\
    0 & 0 & 0 & 0 & \cdots & 0 & b_n & -1\\
    0 & 0 & 0 & 0 & \cdots & 0 & -1 & b_1\\
    1 & 0 & 0 & 0 & \cdots & 0 & 0 & 1\\
    0 & 1 & 1 & 0 & \cdots & 0 & 0 & 0\\
    \vdots & \vdots & \vdots & \vdots & \ddots & \vdots & \vdots & \vdots\\
    0 & 0 & 0 & 0 & \cdots & 1 & 1 & 0
   \end{array}\right)< 2n \,.
 $$
It is easy to see that the assumptions of Theorem~\ref{veta1}
giving rise to eigenvalues corresponding $kl_0 = 2\pi m$ are
satisfied in the case $b_j = -1, \;\forall j\in\{1,\dots N\}$,
which corresponds to $\delta$-couplings. The counterpart case,
$b_j = 1$, corresponding to $\delta_\mathrm{s}'$ couplings leads
to the requirement
 $$
   \mathrm{rank}\,\left(\begin{array}{ccccccccc}
    1 & 1 & 0 & 0 & 0 &\cdots & 0 & 0 & 0\\
    0 & 1 & -1 & 0 & 0 &\cdots & 0 & 0 & 0\\
    0 & 0 & 1 & 1 & 0 &\cdots & 0 & 0 & 0\\
    0 & 0 & 0 & 1 & -1 &\cdots & 0 & 0 & 0\\
    \vdots & \vdots & \vdots & \vdots & \vdots & \ddots & \vdots & \vdots & \vdots\\
    0 & 0 & 0 & 0 & 0 &\cdots & 1 & -1 & 0\\
    0 & 0 & 0 & 0 & 0 &\cdots & 0 & 1 & 1\\
    -1 & 0 & 0 & 0 & 0 &\cdots & 0 & 0 & 1\\
   \end{array}\right)< 2n \,.
 $$
which is satisfied if and only if the number of the edges in the
loop is even.

In a similar way, one can prove that eigenvalues corresponding to
$kl_0 = (2m+1)\pi$ are present in the spectrum of a graph with
$\delta_\mathrm{s}'$-couplings on the loop, while for
$\delta$-couplings this is true provided the loop consists of an
even number of the edges.

If there are several halflines attached to the loop and all the
$b_j$'s are equal to $-1$ or $+1$, respectively, we obtain the
same results as before. One can easily check that for $U = a J + b
I$ all entries of the energy-dependent part
 $$
   (1-k) U_2 [(1-k) U_4 - (k+1) I]^{-1} U_3
 $$
of the effective coupling matrix $\tilde U(k)$ given by
(\ref{en-dep}) are equal, hence the matrix $\tilde U$ can be
written using multiples of the matrices $J$ and $I$ and the
coefficient $b$ is not energy dependent, i.e. $\tilde U = \tilde
a(k) J + \tilde b I$. Since the coefficients $a_j$ can be
eliminated from the final condition, we obtain the same results as
in the energy-independent case.

Notice that the case $b_j = -1$ also includes an array of edges
with rationally related lengths and Dirichlet condition at the
both array endpoints. In this case one of the matrices describing
the coupling is $U_j = \mathrm{diag}\,(-1,-1)$. Similarly, $b_j =
1$ includes the case of an edge array with Neumann conditions at
both the endpoints, the corresponding matrix being $U_j =
\mathrm{diag}\,(1,1)$.

\section{Examples}

As stated in the introduction our main goal is to analyze
resonances which arise from the above discussed embedded
eigenvalues if the rational relation between the graph edge
lengths is perturbed. Let us look now at this effect in two
simple examples.

\subsection{A loop with two leads}

\begin{figure}
   \begin{center}
     \includegraphics{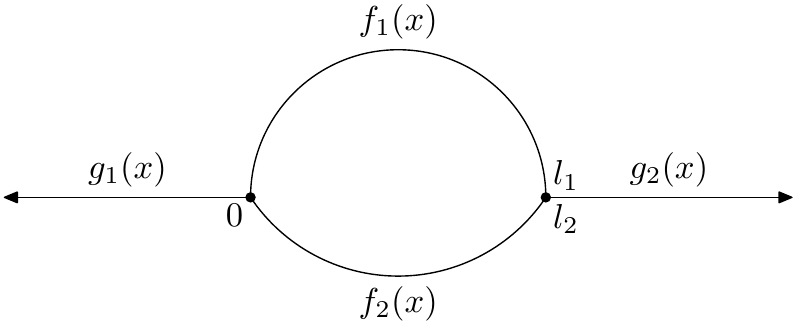}
     \caption{A loop with two leads}\label{figloop}
   \end{center}
\end{figure}

Consider first the graph sketched in Fig.~\ref{figloop} consisting
of two internal edges of lengths~$l_1,\,l_2$ and one halfline
connected at each endpoint. The Hamiltonian acts as
$-\mathrm{d}^2/\mathrm{d} x^2$ on each link. The corresponding
Hilbert space is $L^2(\mathbb{R}^+)\oplus L^2(\mathbb{R}^+)\oplus
L^2([0,l_1]) \oplus L^2([0,l_2])$; states of the system are
described by columns $\psi=(g_1,g_2,f_1,f_2)^T$. For a greater
generality, let us consider the following coupling conditions
\cite{ES} which include the $\delta$-coupling but allow also the
attachment of the semiinfinite links to the loop to be tuned, and
possibly to be turned off:
 \begin{eqnarray*}
    f_1(0)&=&f_2(0)\,,\qquad f_1(l_1)=f_2(l_2)\,,\\
    f_1(0) &=& \alpha_1^{-1} (f_1'(0)+f_2'(0))  + \gamma_1  g_1'(0)\,,\\
    f_1(l_1) &=& -\alpha_2^{-1} (f_1'(l_1)+f_2'(l_2))  + \gamma_2  g_2'(0)\,,\\
    g_1(0) &=& \bar\gamma_1 (f_1'(0)+f_2'(0))  + \tilde\alpha_1^{-1}  g_1'(0)\,,\\
    g_2(0) &=& -\bar\gamma_2 (f_1'(l_1)+f_2'(l_2))  + \tilde\alpha_2^{-1}   g_2'(0)\,.
 \end{eqnarray*}
Following the construction described in Sec.~\ref{sec2} and
parametrizing the internal edges by $l_1 = l(1-\lambda)$, $l_2 =
l(1+\lambda)$, $\lambda \in [0,1]$ --- which effectively means
shifting one of the connections points around the loop as
$\lambda$ is changing --- one arrives at the final condition for
resonances in the form
 \begin{eqnarray}
   \lefteqn{\sin{kl(1-\lambda)}\sin{kl(1+\lambda)}
   -4k^2\beta_1^{-1}(k)\beta_2^{-1}(k)\sin^2{kl}}
   \nonumber \\[.5em]  &&
   +k[\beta_1^{-1}(k)+\beta_2^{-1}(k)]\sin{2kl}=0\,, \label{1-con}
 \end{eqnarray}
where $\beta_i^{-1}(k) := \alpha_{i}^{-1}+ \frac{ik |\gamma_i|^2
}{1-ik \tilde\alpha_i^{-1}}$.

We are interested how the solutions to the above condition change
with respect to change of the length parameter $\lambda \to
\lambda'=\lambda+\varepsilon$. It is easy to check that any
solution $k$ depends on $\varepsilon$ continuously, and therefore
for small $\varepsilon$ we can thus construct a perturbation
expansion. Let $k_0$ be solution of (\ref{1-con}) for $\lambda$
and $k$ solution for $\lambda'$; the difference $\kappa = k - k_0$
can be obtained using the Taylor expansion
 \begin{eqnarray}
   \hspace{-15mm}\kappa l[\sin(2k_0l)-\lambda\sin(2k_0 l\lambda)]
   -4\kappa l k_0^2\beta_1^{-1}(k_0)\beta_2^{-1}(k_0)\sin{2 k_0 l}-
\nonumber\\
   \hspace{-10mm}-4\kappa[2k_0\beta_1^{-1}(k_0)\beta_2^{-1}(k_0)
   +k_0^2(\beta_1^{-1}(k_0)\tilde\beta_2(k_0)+
    \tilde\beta_1(k_0)\beta_2^{-1}(k_0))]\sin^2{k_0 l}+
\nonumber\\
   \hspace{-10mm}+\kappa(\beta_1^{-1}(k_0)+\beta_2^{-1}(k_0)
   +\tilde\beta_1(k_0)k_0+\tilde\beta_2(k_0)k_0)
    \sin{2k_0 l}+2\kappa l k_0(\beta_1^{-1}(k_0)+
\nonumber\\
   \hspace{-10mm}+\beta_2^{-1}(k_0))\cos{2k_0 l}
   -\kappa l[\varepsilon\cos{k_0l\varepsilon}\sin{k_0 l (2\lambda+\varepsilon)}+
\nonumber\\
   +(2\lambda+\varepsilon)\cos{k_0 l(2\lambda
   +\varepsilon)}\sin{k_0 l\varepsilon}]+
    \mathcal{O}(\kappa^2) =\sin{k_0 l(2\lambda
    +\varepsilon)}\sin{k_0 l\varepsilon}\,,\label{1-loopk1}
 \end{eqnarray}
where $\tilde\beta_j(k_0) =i|\gamma_j|^2/(1-ik_0
\tilde\alpha_j^{-1})^2$. This equation can be used to determine
$\kappa$ in the leading order. Denoting the coefficient of
$\kappa$ by $f(k_0)$ and the \emph{rhs} of the above equation by
$g(\lambda,\varepsilon)$ we find that the error in such an
evaluation is
 $$
   \delta = \frac{\mathcal{O}(\kappa^2)}{f(k_0)}
   =\frac{1}{f(k_0)}
  \,\mathcal{O}\left(\frac{g^2(\lambda,\varepsilon)}{f^2(k_0,
  \varepsilon)}\right)\,.
 $$
Since the \emph{rhs} of (\ref{1-loopk1}) is
$\mathcal{O}(\varepsilon)$ as $\varepsilon\to 0$, the error we
make by neglecting the term $\mathcal{O}(\kappa^2)$ is
$\mathcal{O}(\varepsilon^2)$. In fact, in the vicinity of the
embedded eigenvalues, i.e. for $2\lambda k_0 l$ close to $= 2n\pi$
the error is even smaller, namely $\mathcal{O}(\varepsilon^4)$ as
we will see below.

In fact, we can get more from eq.~(\ref{1-loopk1}) than just the
perturbative expansion. We are interested in the global behaviour,
i.e. trajectories of the resonance poles in the lower complex
halfplane as $\lambda$ changes. To obtained them one should solve
eq.~(\ref{1-con}), numerically since an analytic solution is
available in exceptional cases only. One can, however, solve also
numerically the approximate equation (\ref{1-loopk1}) starting
from $\lambda = \frac{m}{n}$ where corresponding the embedded
eigenvalues given by $kl = n\pi$ are present, and taking
$\varepsilon$ perturbations of the successive solutions. This
method is simple and we have employed it in the examples below,
with a sufficiently small step, $\varepsilon=5\cdot 10^{-5}$. To check
the consistency, we have compared the results in the second
example with a direct numerical solution of eq.~(\ref{1-con})
found with the step $0.05$ in the parameter $\lambda$, and found
that they give closely similar results, the relative error being
of order of $10^{-3}$.

Examples of poles trajectories obtained in the described way from
eq.~(\ref{1-loopk1}) are shown in
Figs.~\ref{figukazka1}--\ref{figkapka}. Eq.~(\ref{1-con}) has the
real solution $kl =n\pi,\,n\in \mathbb{N}$ for $\lambda = m/n,\,
m\in\mathbb{N}$, the corresponding eigenfunction is $\psi =
(0,0,\sin{n\pi x/l},-\sin{n\pi x/l})^T$. On Fig.~\ref{figukazka1}
corresponding to $n = 2$ the pole returns to the real axis when
$\lambda = 1/2$ and $\lambda = 1$. On the other hand,
Fig.~\ref{figpolovracec} with $n = 3$ shows the situation when the
pole returns to the real axis only for $\lambda = 2/3$, while for
$\lambda = 1/3$ and $\lambda = 1$ the appropriate solution is a
resonance. Similarly, the pole on Fig.~\ref{figkapka} where $n=2$
returns to the real axis only if $\lambda = 1$. To show how fast
the poles are moving, the change of the parameter $\lambda$ from 0
to 1 is marked by changing the colour from red ($\lambda = 0$) to
blue ($\lambda = 1$; visible online).

\begin{figure}
   \begin{center}
     \includegraphics{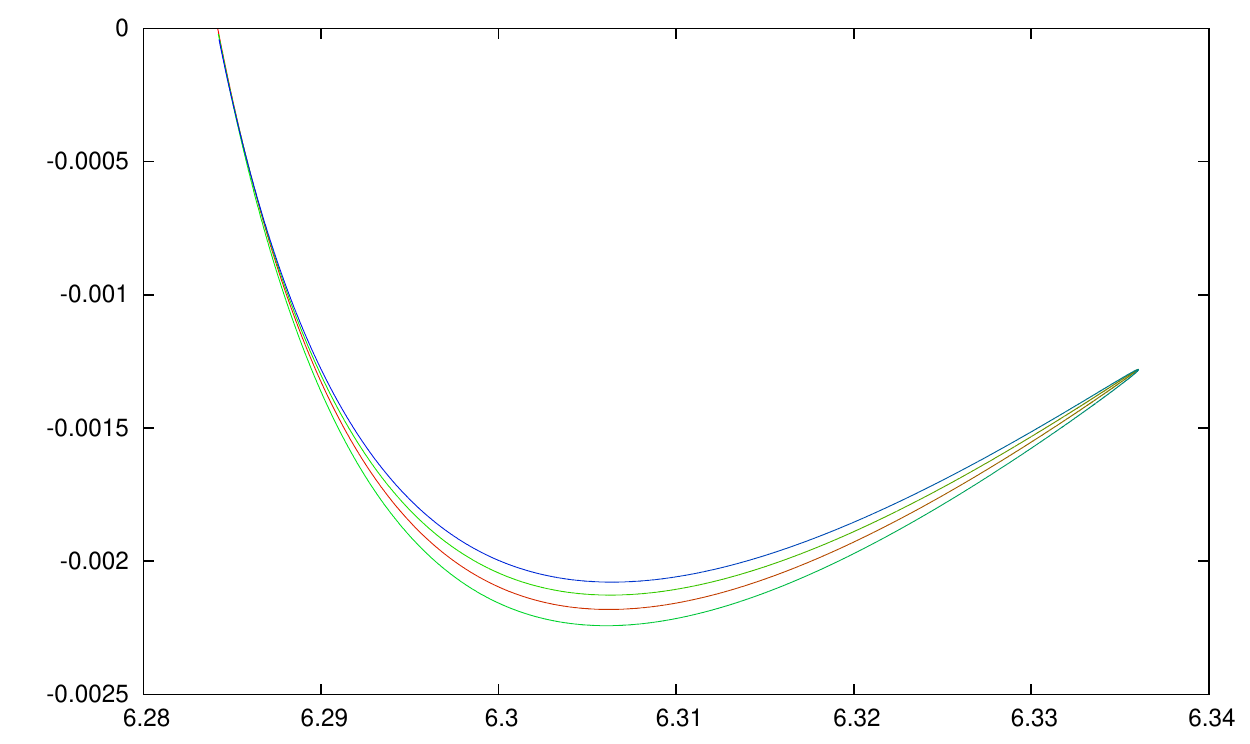}
     \caption{The trajectory of the resonance pole in the lower complex
     halfplane starting from $k_0 = 2\pi$ corresponding to $\lambda=0$ for $l=1$ and the coefficients
     values $\alpha_1^{-1}=1$, $\tilde\alpha_1^{-1}=-2$, $|\gamma_1|^2=1$,
     $\alpha_2^{-1}=0$, $\tilde\alpha_2^{-1}=1$, $|\gamma_2|^2=1$, $n=2$.
     The colour coding (visible online) shows the dependence on $\lambda$
     changing from red ($\lambda = 0$) to blue ($\lambda = 1$).}
     \label{figukazka1}
   \end{center}
\end{figure}

\begin{figure}
   \begin{center}
     \includegraphics{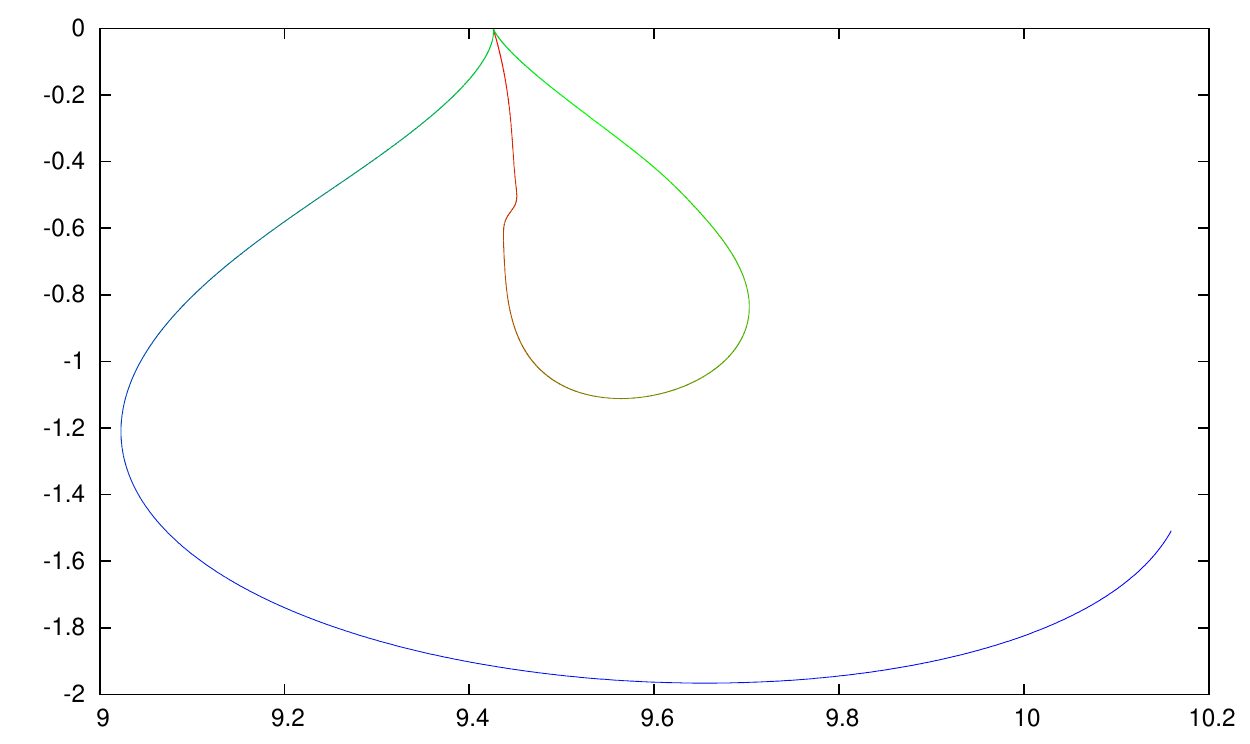}
     \caption{The trajectory of the resonance pole starting at $k_0 = 3\pi$ corresponding to $\lambda=0$ 
     for the coefficients values $\alpha_1^{-1}=1$, $\alpha_2^{-1}=1$,
     $\tilde\alpha_1^{-1}=1$, $\tilde\alpha_2^{-1}=1$,
     $|\gamma_1|^2=|\gamma_2|^2=1$, $n=3$. The colour coding (visible
     online) is the same as in the previous picture.}
     \label{figpolovracec}
   \end{center}
\end{figure}

\begin{figure}
   \begin{center}
     \includegraphics{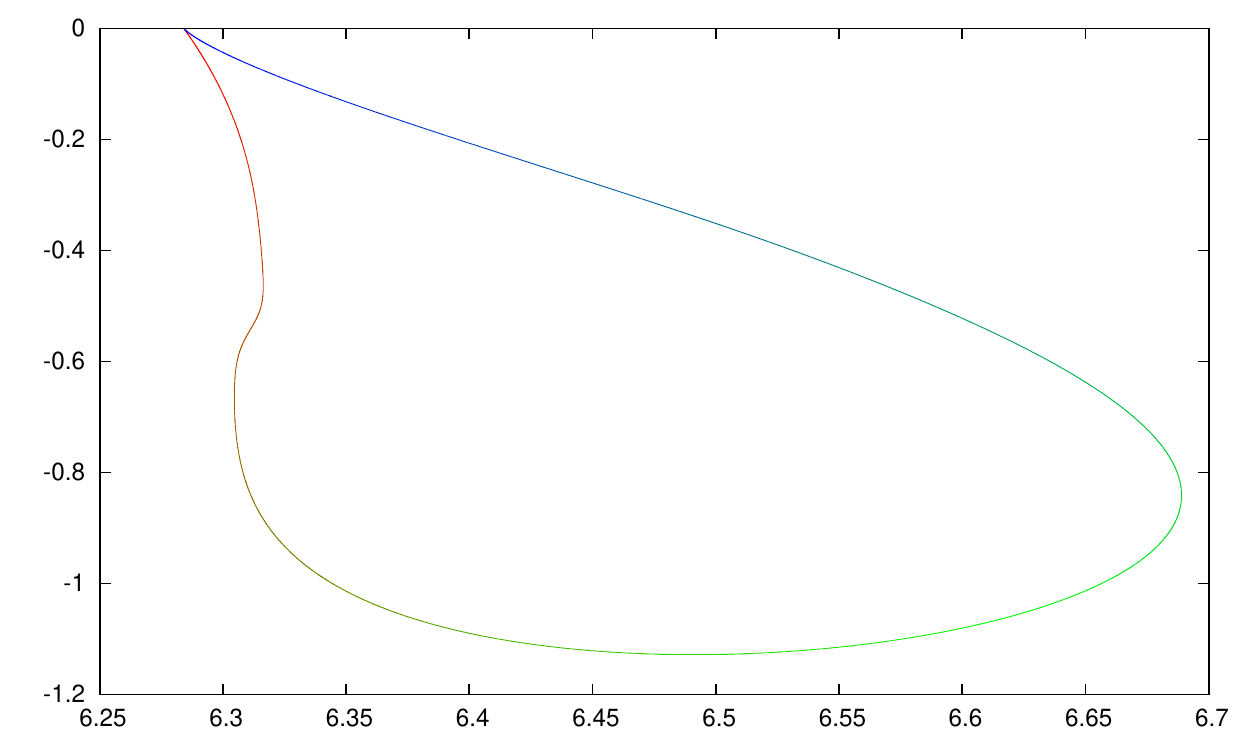}
     \caption{The trajectory of the resonance pole starting at $k_0 = 2\pi$ corresponding to $\lambda=0$ for
      the coefficients values $\alpha_1^{-1}=1$, $\alpha_2^{-1}=1$,
      $\tilde\alpha_1^{-1}=1$, $\tilde\alpha_2^{-1}=1$, $|\gamma_1|^2=1$,
      $|\gamma_2|^2=1$, $n=2$. The colour coding is the same as above. }
     \label{figkapka}
   \end{center}
\end{figure}

Let us now investigate the asymptotic behaviour of the resonances
in the vicinity of the embedded eigenvalue, in particular, the
angle~$\varphi$ between the pole trajectory emerging from $k_0=
n\pi/l$ with $\lambda_0 = m/n, m\in\{0,1,\dots,n\}$ and the real
axis. For small $\kappa$ the difference $\varepsilon =
\lambda-\lambda_0$ is also small. We use a rewritten form of the
condition (\ref{1-con}),
 \begin{eqnarray}
   f(k,\lambda) &\!=\!& \cos{2kl\lambda}-\cos{2kl}
   -8k^2\beta_1^{-1}(k)\beta_2^{-1}(k)\sin^2{kl} \nonumber
   \\ && +2k(\beta_1^{-1}(k)+\beta_2^{-1}(k))\sin{2kl} = 0\,.
   \label{1-con2}
 \end{eqnarray}
The function $f(k,\lambda)$ is, with the exception of points $k =
-i\tilde\alpha_j$, continuous and its first partial derivative
with respect to $\lambda$ is at $\lambda_0$ is equal to zero,
hence
 \begin{eqnarray*}
   \phantom{AAAAAA}0 &=& f(k,\lambda)\approx f(k_0,\lambda_0)
   +\left.\frac{\partial^2 f}{\partial \lambda^2}\right|_{k_0, \lambda_0}
    \varepsilon^2+\left.\frac{\partial f}{\partial k}\right|_{k_0, \lambda_0}
    \kappa\,,
 \\
   \left.\frac{\partial f}{\partial k}\right|_{(k_0, \lambda_0)}
   &=& 4n\pi\left[\beta_1^{-1}(k_0)
    +\beta_2^{-1}(k_0)\right]\,,
 \\
    \left.\frac{\partial^2 f}{\partial \lambda^2}\right|_{(k_0, \lambda_0)} &=&
    -4(kl)^2\cos{2kl\lambda}=-4(\pi n)^2\,.
 \end{eqnarray*}
For small $\kappa$ we obtain using (\ref{1-loopk1})
 \begin{eqnarray}
    \phantom{AAAAAAA}\kappa &\approx& \varepsilon^2
    \frac{\pi n}{\beta_1^{-1}(k_0)+\beta_2^{-1}(k_0)}\,,\nonumber
\\ [.5em]
    \tan{\varphi} =\frac{\mathrm{Im}\,\kappa}{\mathrm{Re}\,\kappa}
    &=& \frac{\frac{k_0|\gamma_1|^2}{1+k_0^2\tilde\alpha_1^{-2}}
    +\frac{k_0|\gamma_2|^2}{1+k_0^2\tilde\alpha_2^{-2}}}{\alpha_1^{-1}
    +\alpha_2^{-1}-\frac{k_0^2|\gamma_1|^2
    \tilde\alpha_1^{-1}}{1+k_0^2\tilde\alpha_1^{-2}}-\frac{k_0^2|\gamma_2|^2
    \tilde\alpha_2^{-1}}{1+k_0^2\tilde\alpha_2^{-2}}}\,,\quad k_0
    = \frac{n\pi}{l}\,.\phantom{xxxx}\label{1-angle1}
 \end{eqnarray}
For $|\gamma_1| = |\gamma_2| = 0$ the poles are real and $\varphi
=0$; this is the case when the loop and the leads are decopupled
and the eigenvalues remain embedded. On the other hand, if
$\alpha_1^{-1} = \tilde\alpha_1^{-1} = \alpha_2^{-1} =
\tilde\alpha_2^{-1} = 0$ then the real part of $\kappa$ is zero
and the pole trajectory goes from $k_0$ perpendicular to the
horizontal line, i.e. $\varphi  = \pi /2$.

Furthermore, let us investigate the behavior of the pole
trajectories hight in the spectrum, i.e. for large values of $n$.
Suppose that $k = k_0 + \kappa$, $k_0 = n\pi/l$, $|\kappa| \ll
\pi/l$; then
 \begin{eqnarray*}
   \cos{2kl\lambda}-\cos{2kl} &=& \cos{2k_0l\lambda}\cos{2\kappa l\lambda}
   - \sin{2k_0l\lambda}\sin{2\kappa l\lambda} -\cos{2\kappa l} \\
   &=& (\cos{2n\pi\lambda}- 1)- \sin{(2\pi n\lambda)} \,2\kappa l\lambda
   + {\mathcal O}(\kappa^2)\,.
 \end{eqnarray*}
The condition (\ref{1-con2}) for small $\kappa$ becomes
 $$
   (\cos{2n\pi\lambda}-1)-\sin{(2\pi n\lambda)} \,2\kappa l\lambda
    +2\frac{n\pi}{l}\left[\beta_1^{-1}(k_0)
    +\beta_2^{-1}(k_0)\right]2\kappa l + {\mathcal O}(\kappa^2) =0\,.
 $$
Using the expressions of coefficients $\beta_j (k)$ we obtain
 \begin{eqnarray*}
   \beta_j^{-1}(k_0) = \alpha_j^{-1}-\frac{|\gamma_j|^2}{\tilde\alpha_j^{-1}}
    +i\frac{l|\gamma_j|^2}{n\pi \tilde\alpha_j^{-2}}+{\mathcal O}(n^{-2})
    \qquad \hbox{for}\quad \tilde\alpha_j^{-1}\not = 0\,, \\
   \beta_j^{-1}(k_0) = i\frac{n\pi}{l}|\gamma_j|^2 + {\mathcal O}(1)
    \qquad \hbox{for}\quad \tilde\alpha_j^{-1} = 0\,.
 \end{eqnarray*}
The quantities appearing above,
 $$
  |\cos{(2n\pi\lambda)}-1| \leq 2 \quad \textrm{and}\quad
  |\sin{(2\pi \kappa l\lambda)}|\leq 1\,
 $$
are bounded, thus for $\tilde\alpha_1^{-1}\not = 0$ and
$\tilde\alpha_2^{-1}\not = 0$, we have
 \begin{equation*}
   |\mathrm{Im} \, \kappa| \leq \frac{l}{2(\pi n)^2}
    \frac{|\gamma_1|^2/\tilde\alpha_1^{-2}+|\gamma_2|^2/\tilde\alpha_2^{-2}}{(\alpha_1^{-1}+
    \alpha_2^{-1}-|\gamma_1|^2/\tilde\alpha_1^{-1}-|\gamma_2|^2/\tilde\alpha_2^{-1})^2}+
    {\mathcal O}(n^{-3})\,,
 \end{equation*}
while for $\tilde\alpha_1^{-1}= 0$ and $\tilde\alpha_2^{-1} = 0$
the inequality reads
 \begin{equation*}
   |\mathrm{Im} \, \kappa| \leq \frac{l}{2(\pi n)^2}
    \frac{1}{|\gamma_1|^2+|\gamma_2|^2}+{\mathcal O}(n^{-3})\,,
 \end{equation*}
and for $\tilde\alpha_1^{-1} = 0,\,\tilde\alpha_2^{-1}\not = 0$ we have
 \begin{equation*}
   |\mathrm{Im} \, \kappa| \leq \frac{l}{2(\pi n)^2}
    \frac{1}{|\gamma_1|^2}+{\mathcal O}(n^{-3})\,.
 \end{equation*}

Let us summarize the discussion of this example. The poles of the
resolvent are given by the condition (\ref{1-con}), or
equivalently, by (\ref{1-con2}). If $\lambda = m/n,\,
m\in\mathbb{N}$, real eigenvalues corresponding to $kl
=n\pi,\,n\in \mathbb{N}$, occur. They may correspond to a
particular pole of the resolvent returning to the real axis for
$\lambda = m/n,\, m\in\mathbb{N}$, as in Fig.~\ref{figukazka1}.
However, for other coupling conditions, the pole may return only
for certain $\lambda$ --- see Figs.~\ref{figpolovracec}
and~\ref{figkapka}, while for other rational $\lambda$ its place
may be taken by the pole which has been a resonance for $\lambda =
0$. The angle between the resonance trajectory and the real axis
does not depend on $\lambda$ and is given by (\ref{1-angle1}). If
the pole trajectory is near the original eigenvalue, then the
distance from the real axis is of order of ${\mathcal O}(n^{-2})$
for large $n$.

\subsection{A cross-shaped graph}

Let us now consider another simple graph, this time
consisting of two leads and two internal edges attached to the
leads at one point -- cf.~Fig.~\ref{figresonator}; the lengths of
the internal edges are $l_1 = l(1-\lambda)$ and
$l_2=l(1+\lambda)$.
\begin{figure}
   \begin{center}
     \includegraphics{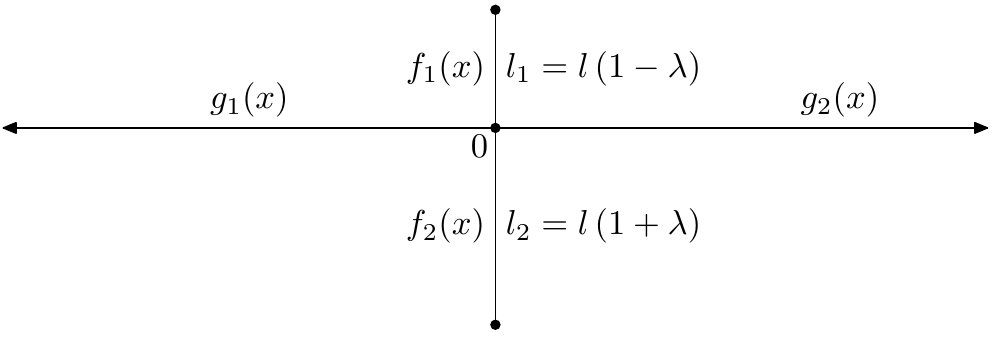}
     \caption{A cross-shaped resonator}
     \label{figresonator}
   \end{center}
\end{figure}
The Hamiltonian acts again as $-\mathrm{d}^2/\mathrm{d}x^2$ on the
corresponding Hilbert space $L^2(\mathbb{R}^+)\oplus
L^2(\mathbb{R}^+)\oplus L^2([0,l_1]) \oplus L^2([0,l_2])$, and the
states are described by columns $\psi=(g_1,g_2,f_1,f_2)^T$. This
time we restrict ourselves to the $\delta$ coupling as the
boundary condition at the vertex and we consider Dirichlet
conditions at the loose ends, i.e.
 \begin{eqnarray*}
    f_1(0)&=&f_2(0)=g_1(0)=g_2(0)\,,\\
    f_1(l_1)&=&f_2(l_2)=0\,,\\
    \alpha f_1(0) &=& f_1'(0)+f_2'(0)+ g_1'(0)+ g_2'(0)\,.
 \end{eqnarray*}
Using the same technique as above we arrive at two equivalent
forms of the condition for resonances, $k\sin{2kl}
+(\alpha-2ik)\sin{kl(1-\lambda)} \sin{kl(1+\lambda)}=0$ or
 \begin{equation}
    2k\sin{2kl}+(\alpha-2ik)
    (\cos{2kl\lambda-\cos{2kl}})=0\,.\label{1-condresonator}
 \end{equation}
Let us ask when the solution is real. Leaving out the trivial case
$k = 0$ we get from the last equation two conditions referring to the
vanishing of the real and imaginary parts of the \emph{lhs},
 \begin{eqnarray*}
    \hspace{4.2em}\sin{2kl}=0\quad &\Rightarrow&
    \quad kl=\frac{n\pi}{2},\quad n\in {\mathbb Z}\,,
    \\
    \hspace{-.5em}0=\cos{2kl\lambda}-\cos{2kl} &=&\cos{n\pi\lambda}-\cos{n\pi}=
2\sin{\frac{n\pi}{2} (1-\lambda)}\sin{\frac{n\pi}{2}(1+\lambda)}
    \\
\hspace{-.5em}&\Rightarrow&  \quad n\lambda=(n-2m),\quad
m\in{\mathbb Z}\,.
 \end{eqnarray*}
Hence $\lambda = 1-2m/n$, $\,m\in \mathbb{N}_0$, $\,m\leq n/2$.
\begin{figure}
   \begin{center}
     \includegraphics{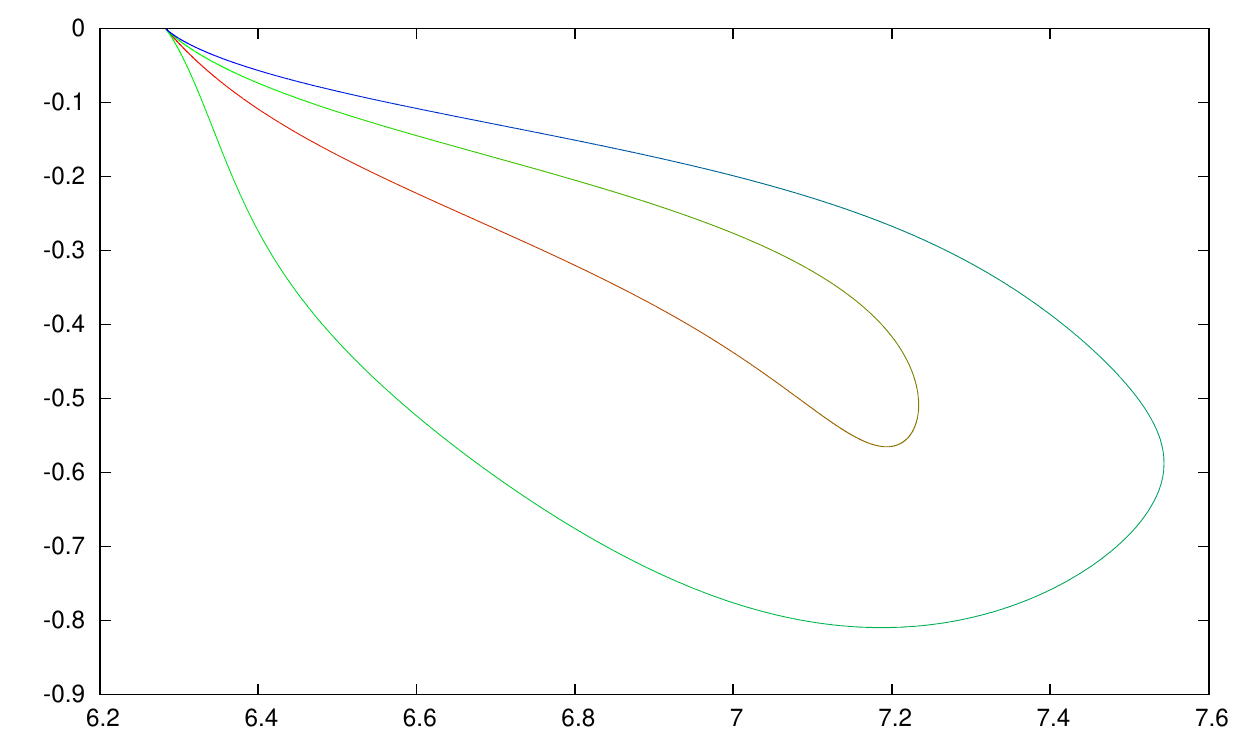}
     \caption{
The trajectory of the resonance pole starting at $k_0 = 2\pi$ for
the coefficients values $\alpha = 10$, $n=2$. The colour coding
(visible online) is the same as in the previous figures.}
     \label{figrez10}
   \end{center}
\end{figure}
\begin{figure}
   \begin{center}
     \includegraphics{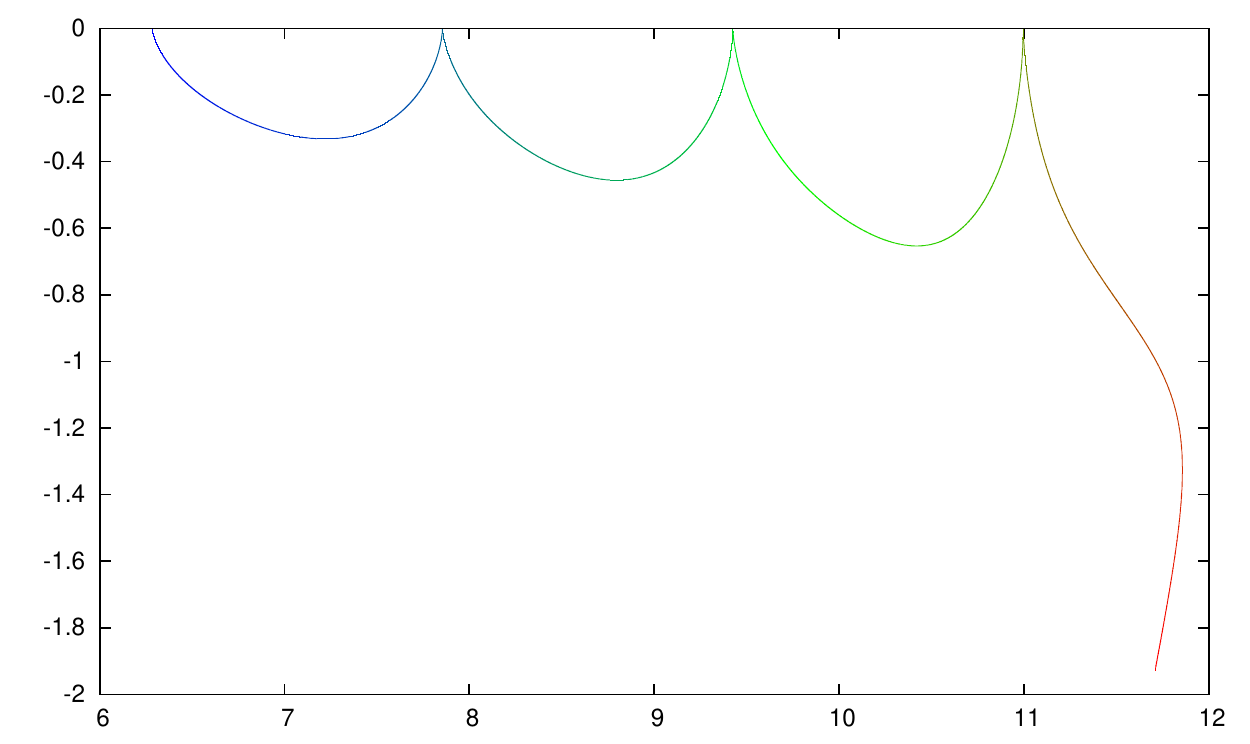}
     \caption{The trajectory of the resonance pole for the
     coefficients values $\alpha = 1$, $n=2$. The colour coding is
     the same as above.}
     \label{figrez1b}
   \end{center}
\end{figure}
\begin{figure}
   \begin{center}
     \includegraphics{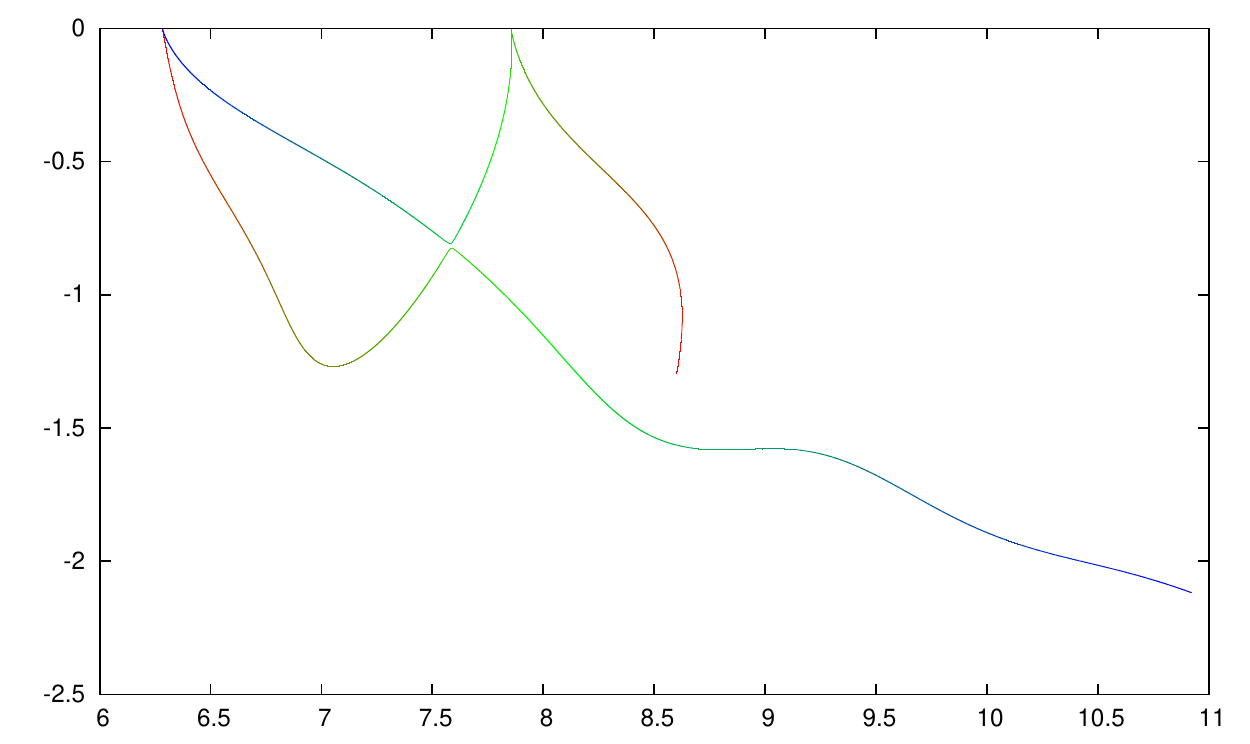}
     \caption{The trajectories of two resonance poles for the
     coefficients values $\alpha = 2{.}596$, $n=2$. We can see
     an avoided resonance crossing -- the former eigenvalue ``travelling from
     the left to the right'' interchanges with the former resonance
     ``travelling the other way'' and ending up as an embedded
     eigenvalue. The colour coding is the same as above.}
     \label{figrez1c}
   \end{center}
\end{figure}
If the difference $\kappa=k-k_0$ is small we obtain from
(\ref{1-condresonator})
 \begin{eqnarray}
    \hspace{-5em}\lefteqn{\kappa \approx -2(\alpha-2ik_0)\sin{k_0l\varepsilon}
    \sin{k_0l(2\lambda+\varepsilon)} \Big\{2i[\cos{2k_0l
    (\lambda+\varepsilon)}-\cos{2k_0l}]}
    \nonumber\\[.3em]
    && \hspace{-4.5em}+(\alpha-2ik_0)2l[(\lambda+\varepsilon)\sin{2k_0l(\lambda+\varepsilon)}
    -\sin{2k_0l}]-2\sin{2k_0l}-4k_0l\cos{2k_0l}\Big\}^{-1}.\label{1-vypocet}
 \end{eqnarray}
Similarly as in the previous example, the error here is
$\mathcal{O}(\kappa^2)$, i.e. $\mathcal{O}(\varepsilon^2)$, and
for $2\lambda k_0 l$ close to $= 2n\pi$ it is even smaller, namely
$\mathcal{O}(\varepsilon^4)$. In the latter case, the above
expression for $k_0 = n\pi/l$, $\lambda = m/n$ and small
$\varepsilon$ yields
 $$
    \kappa \approx -\frac{2(\alpha-2ik_0)(k_0l\varepsilon)^2}{-4k_0l}
    =\frac{n\pi\varepsilon^2}{2}
    \left(\alpha-2i\frac{n\pi}{l}\right)\,.
 $$
The slope of the pole trajectory at its start from $k_0$  is equal
to
 \begin{equation}
    \tan{\varphi}=-\frac{\mathrm{Im}\,\kappa}{\mathrm{Re}\,\kappa}
    =\frac{2n\pi}{\alpha l}\quad\Rightarrow \quad\varphi
    = \arctan{\frac{2n\pi}{\alpha l}}\,.\label{1-angle2}
 \end{equation}

As we have said, the embedded eigenvalues occur in accordance with
(\ref{1-condresonator}) at $kl=n\pi/2$, $n\in \mathbb{Z}$ for
$\lambda = 1-2m/n$, $m\in \mathbb{N}_0$, $m\leq n/2$. The
geometric perturbation gives rise to pole trajectories which can
be found from (\ref{1-condresonator}), or from (\ref{1-vypocet})
with a sufficiently small step. Examples worked out using the
second method are on
Figs.~\ref{figrez10}--\ref{figrez1c}. We see that a resolvent pole
may return to the same point, or it may become another eigenvalue
or a resonance. Another interesting type of behaviour, an avoided
resonance crossing, can be seen on Fig.~\ref{figrez1c}.

\section{The general case}

After analyzing the above two examples, let us look what could be
said about the geometric perturbation problem in the general case.

\subsection{Multiplicity of the eigenvalues}

Suppose that $k_0$ is an eigenvalue of multiplicity $d$ embedded
in a continuous spectrum of $H$. First we will assume that $k_0
l_0 = 2\pi m$. Our aim is now to determine whether $k_0$ is still
eigenvalue (and what is its multiplicity) if the lengths of the
graph edges are perturbed. We will write the lengths as $l_j' =
l_0 (n_j + \varepsilon_j)$ assuming that $n_j \in \mathbb{N}$ for
$j \in \{1,\dots , n\}$, while $n_j$ is not an integer for $j \in
\{n+1,\dots , N\}$.

From the construction described in the proof of theorem
\ref{veta1} we find that the condition (\ref{rat-matice}) is not
affected by small lengths variations of the ``nointeger'' edges,
$j\in \{n+1,\dots , N\}$. Hence the number of rationality-related
eigenvalues of the perturbed graph referring to the first $n$
edges does not depend on perturbations of the other edge
lengths. The spectral condition (\ref{rat-con-l2}) can be written
as $\mathrm{det}\,J(k)=0$ if we put $J(k) := C(k) + S(k)$. Using
the expansion
 \begin{eqnarray*}
ik \cos{\frac{kl_0(n_j+\varepsilon_j)}{2}}\mp
\sin{\frac{kl_0(n_j+\varepsilon_j)}{2}}
 \\
  \quad =\cos{\frac{k_0 l_0n_j}{2}} \left(ik_0 \cos{\frac{k_0 l_0
  \varepsilon_j}{2}}\mp \sin{\frac{k_0 l_0 \varepsilon_j}{2}}\right) +
  \mathcal{O}(k-k_0)\,,
 \end{eqnarray*}
and an analogous one for $\cos{\frac{kl_0(n_j+\varepsilon_j)}{2}}+
i k \sin{\frac{kl_0(n_j+\varepsilon_j)}{2}}$ one finds that the
entries of $J(k)$ can be rewritten as
 \begin{eqnarray*}
 \hspace{-10mm} J_{i,2j-1}(k) =(u_{i,2j-1} - u_{i,2j})\cos{\frac{k_0 l_0n_j}{2}}
\left(ik_0 \cos{\frac{k_0 l_0 \varepsilon_j}{2}}
- \sin{\frac{k_0 l_0 \varepsilon_j}{2}}\right)+
\\
(\delta_{i,2j-1} -\delta_{i,2j})
\cos{\frac{k_0 l_0n_j}{2}} \left(ik_0 \cos{\frac{k_0 l_0 \varepsilon_j}{2}}
+ \sin{\frac{k_0 l_0 \varepsilon_j}{2}}\right)+ \mathcal{O}(k-k_0)
\\
 \hspace{-10mm} J_{i,2j}(k) = (u_{i,2j-1} + u_{i,2j})\cos{\frac{k_0 l_0n_j}{2}}
\left(\cos{\frac{k_0 l_0 \varepsilon_j}{2}}
+ i k_0 \sin{\frac{k_0 l_0 \varepsilon_j}{2}}\right)+
\\
(\delta_{i,2j-1} +\delta_{i,2j})
\cos{\frac{k_0 l_0n_j}{2}} \left(-\cos{\frac{k_0 l_0 \varepsilon_j}{2}}
+ ik_0\sin{\frac{k_0 l_0 \varepsilon_j}{2}}\right)+ \mathcal{O}(k-k_0)
 \end{eqnarray*}
For small enough $\varepsilon_j$'s and a real nonzero noninteger $k_0$ the terms
$\cos{\frac{k_0 l_0n_j}{2}}$, $\,i k_0\cos{\frac{k_0 l_0
\varepsilon_j}{2}} -\sin{\frac{k_0 l_0 \varepsilon_j}{2}}$ and
$\cos{\frac{k_0 l_0 \varepsilon_j}{2}}+ i k_0 \sin{\frac{k_0 l_0
\varepsilon_j}{2}}$ are nonzero. After dividing the columns of
$J(k)$ by these terms and using the arguments from the proof of
Theorem~\ref{veta1} one arrives at the following conclusion.
 \begin{theorem}
 In the setting of Theorem~\ref{veta1} suppose that the rank of $M_\mathrm{even}$
 is smaller than $2n$. Let us vary the edge lengths,
 $l_j' = l_0 (n_j + \varepsilon_j)$ with sufficiently small $\varepsilon_j$'s; then the
 multiplicity of the eigenvalues $\epsilon = k_0^2 = 4m^2\pi^2 /l_0^2$ due to
 rationality of the first $n$ edges is given by the difference between $2n$
 and the rank of the matrix
{\scriptsize
   \begin{equation*}
   \hspace{-25mm}M_{\mathrm{even}}^{\{\varepsilon_j\}} = \left(\begin{array}{ccccccc}
   u_{11}+\tilde\varepsilon_{1}^a& u_{12}-1+\tilde\varepsilon_{1}^b&u_{13}&u_{14}&\cdots&u_{1,2n-1}&u_{1,2n}\\
   u_{21}-1+\tilde\varepsilon_{1}^b& u_{22}+\tilde\varepsilon_{1}^a&u_{23}&u_{24}&\cdots&u_{2,2n-1}&u_{2,2n}\\
   u_{31}& u_{32}&u_{33}+\tilde\varepsilon_{2}^a&u_{34}-1+\tilde\varepsilon_{2}^b&\cdots&u_{3,2n-1}&u_{3,2n}\\
   u_{41}& u_{42}&u_{43}-1+\tilde\varepsilon_{2}^b&u_{44}+\tilde\varepsilon_{2}^a&\cdots&u_{4,2n-1}&u_{4,2n}\\
   \vdots&\vdots&\vdots&\vdots&\ddots&\vdots&\vdots\\
   u_{2N-1,1}& u_{2N-1,2}&u_{2N-1,3}&u_{2N-1,4}&\cdots&u_{2N-1,2n-1}&u_{2N-1,2n}\\
   u_{2N,1}& u_{2N,2}&u_{2N,3}&u_{2N,4}&\cdots&u_{2N,2n-1}&u_{2N,2n}
   \end{array}\right),
 \end{equation*}
} \noindent
where
 {\scriptsize
 $$
   \tilde \varepsilon_{j}^a (k):= \frac{(1-k_0^2)\sin{k_0 l_0 \varepsilon_j}}
   {2ik_0 \cos{k_0 l_0 \varepsilon_j} -(1+k_0^2)\sin{k_0 l_0 \varepsilon_j}}\,,\quad
    \tilde \varepsilon_{j}^b (k):= \frac{2ik_0(-1+\cos{k_0 l_0 \varepsilon_j})-(1+k_0^2)\sin{k_0 l_0 \varepsilon_j}}
   {2ik_0 \cos{k_0 l_0 \varepsilon_j} -(1+k_0^2)\sin{k_0 l_0 \varepsilon_j}}\,.
 $$
 }
 \end{theorem}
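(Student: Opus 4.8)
The plan is to return to the eigenvalue condition $\det J(k)=0$, with $J(k):=C(k)+S(k)$, now written for the perturbed lengths $l_j'=l_0(n_j+\varepsilon_j)$ and evaluated at the fixed value $k=k_0=2m\pi/l_0$. What has to be computed is the dimension of the space of eigenfunctions of $H_U$ at $k_0$ that are supported on the first $n$ edges only, i.e. whose amplitude vector $v=(A_1,B_1,\dots,A_N,B_N)^{\mathrm{T}}$ has $A_j=B_j=0$ for $j>n$; this is exactly what ``multiplicity due to the rationality of the first $n$ edges'' should mean. Such a vector solves $J(k_0)v=0$ with its support confined to the first $2n$ coordinates, so by rank-nullity the number we want is $2n$ minus the rank of the $2N\times 2n$ matrix formed by the first $2n$ columns of $J(k_0)$, and distinct such $v$'s produce linearly independent eigenfunctions via the Ansatz. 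The whole proof thus reduces to identifying that rank with $\mathrm{rank}\,M_{\mathrm{even}}^{\{\varepsilon_j\}}$, and this identification is obtained by a rank-preserving reduction of those columns, carried out exactly along the lines of the proof of Theorem~\ref{veta1}.

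First I would note that the columns $2n+1,\dots,2N$, which involve only $l_{n+1},\dots,l_N$, are untouched by the perturbation, so $\varepsilon_1,\dots,\varepsilon_n$ enter the first $2n$ columns alone; for the rank we need, these last columns play no role anyway. Inserting the perturbed lengths into $J$ and using the Taylor expansion of $ik\cos\frac{kl_0(n_j+\varepsilon_j)}{2}\mp\sin\frac{kl_0(n_j+\varepsilon_j)}{2}$ displayed above, one finds that at $k=k_0$ (where the $\mathcal{O}(k-k_0)$ remainders vanish) the entries of columns $2j-1$ and $2j$ with $j\le n$ carry the common nonzero factor $\cos\frac{k_0l_0n_j}{2}=(-1)^{mn_j}$ together with a residual factor $ik_0\cos\frac{k_0l_0\varepsilon_j}{2}-\sin\frac{k_0l_0\varepsilon_j}{2}$ in column $2j-1$ and $\cos\frac{k_0l_0\varepsilon_j}{2}+ik_0\sin\frac{k_0l_0\varepsilon_j}{2}$ in column $2j$, both of which are nonzero for $k_0\neq0$ and $|\varepsilon_j|$ small. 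Dividing each of the two columns of every block by its own prefactor --- a rank-preserving operation --- turns column $2j-1$ into the vector with entries $u_{i,2j-1}-u_{i,2j}+(\delta_{i,2j-1}-\delta_{i,2j})p_j$ and column $2j$ into the one with entries $u_{i,2j-1}+u_{i,2j}+(\delta_{i,2j-1}+\delta_{i,2j})q_j$, where $p_j=\frac{ik_0\cos\phi_j+\sin\phi_j}{ik_0\cos\phi_j-\sin\phi_j}$, $q_j=\frac{-\cos\phi_j+ik_0\sin\phi_j}{\cos\phi_j+ik_0\sin\phi_j}$ and $\phi_j=\frac{k_0l_0\varepsilon_j}{2}$.

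Then I would pass, for each $j$, to the half-sum and the half-difference of these two columns, again a rank-preserving step, which disentangles the $u_{i,2j-1}$ and $u_{i,2j}$ contributions: the half-sum becomes a column whose $(2j-1,2j-1)$ entry is $u_{2j-1,2j-1}+\tilde\varepsilon_j^a$, whose $(2j,2j-1)$ entry is $u_{2j,2j-1}-1+\tilde\varepsilon_j^b$, and which equals $u_{i,2j-1}$ in every other row, while the half-difference becomes a column with $(2j-1,2j)$ entry $u_{2j-1,2j}-1+\tilde\varepsilon_j^b$, $(2j,2j)$ entry $u_{2j,2j}+\tilde\varepsilon_j^a$, and $u_{i,2j}$ elsewhere, where $\tilde\varepsilon_j^a:=\frac12(p_j+q_j)$ and $\tilde\varepsilon_j^b:=1+\frac12(q_j-p_j)$. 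A short trigonometric computation --- bringing $p_j,q_j$ over a common denominator and using $2\sin\phi_j\cos\phi_j=\sin 2\phi_j$ and $\cos^2\phi_j-\sin^2\phi_j=\cos 2\phi_j$ with $2\phi_j=k_0l_0\varepsilon_j$ --- shows that these $\tilde\varepsilon_j^a,\tilde\varepsilon_j^b$ are precisely the expressions written in the statement. Assembling the $2n$ columns so obtained, the first $2n$ columns of $J(k_0)$ have been transformed, by rank-preserving moves only, into the first $2n$ columns of $M_{\mathrm{even}}^{\{\varepsilon_j\}}$; their rank is therefore $\mathrm{rank}\,M_{\mathrm{even}}^{\{\varepsilon_j\}}$, and the asserted multiplicity $2n-\mathrm{rank}\,M_{\mathrm{even}}^{\{\varepsilon_j\}}$ follows.

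I expect no deep obstacle: once the problem is set up, the argument is a faithful adaptation of the proof of Theorem~\ref{veta1}. The only genuinely non-mechanical point is the opening interpretation --- one must be careful that the statement concerns the fixed value $k_0=2m\pi/l_0$ itself (which need not survive in the spectrum, so the multiplicity may legitimately decrease, possibly to zero), and that the relevant multiplicity is the dimension of the subspace of eigenfunctions vanishing off the first $n$ edges, read from the first $2n$ columns alone. The fiddliest piece of work is the bookkeeping that places the reduced columns into the exact positions of $M_{\mathrm{even}}^{\{\varepsilon_j\}}$, although, since only its rank enters the conclusion, any column-permuted or rescaled variant would serve equally well.
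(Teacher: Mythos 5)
Your proposal is correct and follows essentially the same route as the paper: expand the entries of $J(k)=C(k)+S(k)$ at $k=k_0$ with the perturbed lengths, divide out the nonvanishing prefactors $\cos\frac{k_0l_0n_j}{2}$, $ik_0\cos\frac{k_0l_0\varepsilon_j}{2}-\sin\frac{k_0l_0\varepsilon_j}{2}$, $\cos\frac{k_0l_0\varepsilon_j}{2}+ik_0\sin\frac{k_0l_0\varepsilon_j}{2}$, and repeat the column manipulations from the proof of Theorem~\ref{veta1} to identify the rank of the first $2n$ columns with that of $M_{\mathrm{even}}^{\{\varepsilon_j\}}$. Your explicit half-sum/half-difference bookkeeping and the trigonometric reduction of $p_j,q_j$ to $\tilde\varepsilon_j^a,\tilde\varepsilon_j^b$ correctly reproduce the formulas the paper leaves implicit.
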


In a similar way one can treat the case when $k_0l_0$ is equal to
odd multiples of $\pi$. Then we employ the expansion
 {\small
 \begin{eqnarray*}
\hspace{-25mm}ik \cos{\frac{kl_0(n_j+\varepsilon_j)}{2}}
\mp \sin{\frac{kl_0(n_j+\varepsilon_j)}{2}}
=  \sin{\frac{k_0 l_0n_j}{2}} \left(-ik_0 \sin{\frac{k_0 l_0 \varepsilon_j}{2}}
\mp \cos{\frac{k_0 l_0 \varepsilon_j}{2}}\right) + \mathcal{O}(k-k_0)
 \end{eqnarray*}
and an analogous expression for
$\cos{\frac{kl_0(n_j+\varepsilon_j)}{2}} + i k
\sin{\frac{kl_0(n_j+\varepsilon_j)}{2}}$; with the help of them we
arrive at the following conclusion.

 \begin{theorem}
In the setting of Theorem~\ref{veta2} suppose that the rank of
$M_\mathrm{odd}$ is smaller than $2n$. Passing to $l_j' = l_0 (n_j
+ \varepsilon_j)$ with small enough $\varepsilon_j$'s, the
multiplicity of the eigenvalues $\epsilon = k_0^2 = (2m+1)^2\pi^2
/l_0^2$ due to rationality of the first $n$ edges is given by the
difference between $2n$ and rank of a matrix
{\scriptsize
 \begin{equation*}
   \hspace{-25mm}M_{\mathrm{odd}}^{\{\varepsilon_j\}} = \left(\begin{array}{ccccccc}
   u_{11}+\tilde\varepsilon_{1}^a& u_{12}+1-\tilde\varepsilon_{1}^b&u_{13}&u_{14}&\cdots&u_{1,2n-1}&u_{1,2n}\\
   u_{21}+1-\tilde\varepsilon_{1}^b& u_{22}+\tilde\varepsilon_{1}^a&u_{23}&u_{24}&\cdots&u_{2,2n-1}&u_{2,2n}\\
   u_{31}& u_{32}&u_{33}+\tilde\varepsilon_{2}^a&u_{34}+1-\tilde\varepsilon_{2}^b&\cdots&u_{3,2n-1}&u_{3,2n}\\
   u_{41}& u_{42}&u_{43}+1-\tilde\varepsilon_{2}^b&u_{44}+\tilde\varepsilon_{2}^a&\cdots&u_{4,2n-1}&u_{4,2n}\\
   \vdots&\vdots&\vdots&\vdots&\ddots&\vdots&\vdots\\
   u_{2N-1,1}& u_{2N-1,2}&u_{2N-1,3}&u_{2N-1,4}&\cdots&u_{2N-1,2n-1}&u_{2N-1,2n}\\
   u_{2N,1}& u_{2N,2}&u_{2N,3}&u_{2N,4}&\cdots&u_{2N,2n-1}&u_{2N,2n}
   \end{array}\right)
 \end{equation*}
} \noindent
with $\tilde\varepsilon_{j}^a$ and $\tilde\varepsilon_{j}^b$ defined in previous theorem.
 \end{theorem}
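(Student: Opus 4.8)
The plan is to mimic the proof of Theorem~\ref{veta1}, but now tracking the contributions of the noninteger-length perturbations $\varepsilon_j$ of the rationality-related edges, since for $k_0l_0=(2m+1)\pi$ the relevant matrix is the ``sine'' matrix $S(k)$ rather than the ``cosine'' matrix. First I would recall that, by the argument already given in the paragraph preceding the statement, the spectral condition $\mathrm{det}\,J(k)=0$ with $J(k)=C(k)+S(k)$ is satisfied whenever the submatrix formed by the first $2n$ columns of $J(k)$ has rank smaller than $2n$; and by the observation following Theorem~\ref{veta1}, perturbing the lengths of the noninteger edges $j\in\{n+1,\dots,N\}$ does not affect the first $2n$ columns, so only the perturbations $\varepsilon_1,\dots,\varepsilon_n$ of the rationality-related edges matter. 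For those columns I would substitute $l_j'=l_0(n_j+\varepsilon_j)$ with $n_j$ an odd-related integer (i.e.\ $k_0l_0n_j/2$ an odd multiple of $\pi/2$, so $\cos\frac{k_0l_0n_j}{2}=0$ and $\sin\frac{k_0l_0n_j}{2}=\pm1$) into the entries $S_{i,2j-1}$, $S_{i,2j}$, $C_{i,2j-1}$, $C_{i,2j}$ and use the displayed expansion together with its analogue for $\cos\frac{kl_0(n_j+\varepsilon_j)}{2}+ik\sin\frac{kl_0(n_j+\varepsilon_j)}{2}$ to write each of the first $2n$ columns of $J(k)$ as $\sin\frac{k_0l_0n_j}{2}=\pm1$ times a vector whose entries involve only $\sin\frac{k_0l_0\varepsilon_j}{2}$, $\cos\frac{k_0l_0\varepsilon_j}{2}$, the $u_{i,\cdot}$, and $\delta_{i,\cdot}$, plus an $\mathcal{O}(k-k_0)$ remainder.

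Next I would argue that the nonvanishing prefactors can be divided out without changing the rank. For small enough $\varepsilon_j$ and real nonzero noninteger $k_0$, the quantities $\sin\frac{k_0l_0n_j}{2}=\pm1$, $-ik_0\sin\frac{k_0l_0\varepsilon_j}{2}\mp\cos\frac{k_0l_0\varepsilon_j}{2}$ and $\cos\frac{k_0l_0\varepsilon_j}{2}-ik_0\sin\frac{k_0l_0\varepsilon_j}{2}$ (the analogues of the nonzero terms used in the proof of Theorem~\ref{veta1}, with sine and cosine swapped) are all nonzero, so dividing the odd-indexed and even-indexed columns of $J(k)$ by the appropriate such terms, and then taking suitable linear combinations of the two columns belonging to each edge, reduces the first $2n$ columns, at $k=k_0$, to those of the matrix $M_{\mathrm{odd}}^{\{\varepsilon_j\}}$. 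The bookkeeping is exactly parallel to the even case: the $u_{i,2j-1}\mp u_{i,2j}$ and $u_{i,2j-1}+u_{i,2j}$ combinations reproduce the $U$-entries, while the Kronecker-delta terms, after being scaled by the ratio of the ``coupling'' factor to the ``identity'' factor, produce precisely the shifts $+1-\tilde\varepsilon_j^b$ and $\tilde\varepsilon_j^a$ in the designated positions, where $\tilde\varepsilon_j^a,\tilde\varepsilon_j^b$ are as in the previous theorem. Here one must be a little careful about signs: compared with $M_{\mathrm{even}}^{\{\varepsilon_j\}}$, the entry $u_{i,2j}-1$ is replaced by $u_{i,2j}+1$ because the Dirichlet-type combination now comes with the opposite sign (this is the same sign flip that distinguishes $M_{\mathrm{odd}}$ from $M_{\mathrm{even}}$ in Theorems~\ref{veta1}--\ref{veta2}), which is why the $+1$ and the $-\tilde\varepsilon_j^b$ appear with those signs in $M_{\mathrm{odd}}^{\{\varepsilon_j\}}$.

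Finally I would conclude, as in the proof of Theorem~\ref{veta1}, that since the rank of the first $2n$ columns of $J(k_0)$ equals the rank of $M_{\mathrm{odd}}^{\{\varepsilon_j\}}$, and rank $<2n$ both suffices for $\mathrm{det}\,J(k_0)=0$ and yields exactly that many linearly independent compactly supported eigenfunctions (supported on the first $n$ edges, vanishing at the vertices), the multiplicity of the embedded eigenvalue $\epsilon=k_0^2=(2m+1)^2\pi^2/l_0^2$ is $2n-\mathrm{rank}\,M_{\mathrm{odd}}^{\{\varepsilon_j\}}$. I expect the main obstacle to be purely the sign/normalization bookkeeping: verifying that the precise algebraic form of $\tilde\varepsilon_j^a$ and $\tilde\varepsilon_j^b$ (in particular that the \emph{same} expressions as in the even case reappear, merely with a $+1$ instead of $-1$ next to the $u_{i,2j}$ entries) really comes out of the column-scaling step with the sine/cosine roles interchanged, and checking that no degeneracy (vanishing denominators) arises for small $\varepsilon_j$ at a generic noninteger $k_0$; everything else is a routine transcription of the already-established argument, which is why I would only state it and refer back to the proof of Theorem~\ref{veta1}.
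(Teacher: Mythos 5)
Your proposal is correct and follows essentially the same route as the paper, which itself only sketches this case by invoking the displayed odd-case expansion (factoring out $\sin\frac{k_0l_0n_j}{2}=\pm 1$ in place of $\cos\frac{k_0l_0n_j}{2}$) and repeating the column-division and rank argument from the proof of Theorem~\ref{veta1}. The only slip is cosmetic: the nonvanishing factor for the even-indexed columns should be $-\sin\frac{k_0l_0\varepsilon_j}{2}+ik_0\cos\frac{k_0l_0\varepsilon_j}{2}$ (sine and cosine genuinely interchanged), not $\cos\frac{k_0l_0\varepsilon_j}{2}-ik_0\sin\frac{k_0l_0\varepsilon_j}{2}$, which does not affect the argument since it is likewise nonzero for small $\varepsilon_j$.
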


\subsection{Total number of poles of the resolvent after perturbation}

In general an embedded eigenvalue can split under the geometric
perturbations considered here, a part of it being preserved with a
lower multiplicity while the rest is turned into resonance(s).
Above we have shown what the reduced multiplicity of the embedded
eigenvalue is, now we complement this result by showing that the
total number of poles produced in this way, multiplicity taken
into account, remains locally preserved. Before stating the
result, let us first demonstrate two useful lemmata.

 \begin{lemma}\label{rat-lemma1}
Let $(k,\vec \varepsilon)\mapsto g(k,\vec \varepsilon) :
\mathbb{C}\times \mathbb{R}^m \to \mathbb{C}$ be a function
uniformly continuous in $\vec \varepsilon$ for all
$\vec\varepsilon \in \mathcal{U}_{\varepsilon_0}(0)$ and $k \in
\mathcal{U}_R(k_0)$, $\varepsilon_0>0, \,R>0$, and holomorphic in
$k$ in $\mathcal{U}_R(k_0)$ for all $\vec\varepsilon \in
\mathcal{U}_{\varepsilon_0}(0)$. Furthermore, let
$\lim_{\vec\varepsilon\to 0} g(k,\vec\varepsilon) = (k - k_0)^d$.
Then there exist such $\delta>0$ and $\varepsilon_0'>0$ that for all
$\vec\varepsilon \in \mathcal{U}_{\varepsilon_0'}(0)$ the sum of
the multiplicities of zeros of $g(k,\vec\varepsilon)$ in
$\mathcal{U}_\delta (k_0)$ is $d$.
 \end{lemma}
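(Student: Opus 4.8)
The plan is to reduce the statement to a direct application of Rouch\'e's theorem (equivalently, the argument principle) on a suitably small circle around $k_0$, using the convergence $g(\cdot,\vec\varepsilon)\to(\cdot-k_0)^d$ to control $g$ along that circle. First I would fix the radius: since the limit function $(k-k_0)^d$ has $k_0$ as its only zero, choose any $\delta$ with $0<\delta<R$; then on the circle $\{k:|k-k_0|=\delta\}$ one has $|(k-k_0)^d|=\delta^d=:m_0>0$. The closed disk $\overline{\mathcal{U}_\delta(k_0)}$ is compactly contained in $\mathcal{U}_R(k_0)$, so $g(\cdot,\vec\varepsilon)$ is holomorphic on a neighbourhood of it for every $\vec\varepsilon\in\mathcal{U}_{\varepsilon_0}(0)$, which is exactly the regularity Rouch\'e's theorem needs.

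Next I would turn the hypotheses into uniform smallness of $g(k,\vec\varepsilon)-(k-k_0)^d$ on that circle. The assumed uniform continuity in $\vec\varepsilon$ — uniform with respect to $k\in\mathcal{U}_R(k_0)$ — together with $\lim_{\vec\varepsilon\to0}g(k,\vec\varepsilon)=(k-k_0)^d$, which in particular forces $g(k,0)=(k-k_0)^d$, yields: for every $\eta>0$ there is $\varepsilon_0'\in(0,\varepsilon_0)$ such that
$$\sup_{|k-k_0|\le R}\bigl|g(k,\vec\varepsilon)-(k-k_0)^d\bigr|=\sup_{|k-k_0|\le R}\bigl|g(k,\vec\varepsilon)-g(k,0)\bigr|<\eta\qquad\text{whenever}\quad|\vec\varepsilon|<\varepsilon_0'.$$
Taking $\eta=m_0=\delta^d$ gives, for all such $\vec\varepsilon$ and all $k$ with $|k-k_0|=\delta$,
$$\bigl|g(k,\vec\varepsilon)-(k-k_0)^d\bigr|<m_0=\bigl|(k-k_0)^d\bigr|.$$

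Finally, Rouch\'e's theorem applied on the circle $|k-k_0|=\delta$ shows that $g(\cdot,\vec\varepsilon)$ and $(\cdot-k_0)^d$ have the same number of zeros, counted with multiplicity, inside $\mathcal{U}_\delta(k_0)$; the latter is $d$, hence so is the former. (This also shows $g(\cdot,\vec\varepsilon)\not\equiv0$, since it does not vanish on the circle, so the count is meaningful.) This proves the claim with the $\delta$ and $\varepsilon_0'$ just constructed; one may note in passing that $\delta$ was arbitrary below $R$, which is convenient for later use.

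The only genuinely delicate point — and the one I would write out carefully — is the passage from ``uniformly continuous in $\vec\varepsilon$'' to the uniform convergence $g(\cdot,\vec\varepsilon)\to(\cdot-k_0)^d$ on the closed disk (including the identification $g(k,0)=(k-k_0)^d$); everything after that is the textbook Rouch\'e argument. A secondary bookkeeping point is to make sure the closed disk used for Rouch\'e stays strictly inside $\mathcal{U}_R(k_0)$ so that holomorphy of $g(\cdot,\vec\varepsilon)$ there is guaranteed, which is why $\delta<R$ is imposed from the outset.
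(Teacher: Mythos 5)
Your proof is correct, but it takes a genuinely different route from the paper's. You apply Rouch\'e's theorem directly: the continuity in $\vec\varepsilon$ uniform with respect to $k$ upgrades the pointwise limit to uniform convergence $g(\cdot,\vec\varepsilon)\to(\cdot-k_0)^d$ on a closed disk $\overline{\mathcal{U}_\delta(k_0)}\subset\mathcal{U}_R(k_0)$ (and identifies $g(k,0)=(k-k_0)^d$), after which the bound $|g(k,\vec\varepsilon)-(k-k_0)^d|<\delta^d=|(k-k_0)^d|$ on the circle $|k-k_0|=\delta$ fixes the zero count inside at $d$. The paper instead expands $g$ in a Taylor series at $k_0$ and factorizes $g=P(k,\vec\varepsilon)\,[1+(k-k_0)\tilde h(k,\vec\varepsilon)]$ with $P$ a polynomial of degree $d$ in $k$, arguing that $P$ contributes exactly $d$ zeros near $k_0$ (fundamental theorem of algebra plus continuous dependence of the roots on $\vec\varepsilon$) while the bracket is kept away from zero for small $\vec\varepsilon$. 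Your argument is shorter and bypasses the two delicate ingredients of the paper's proof --- the continuity of polynomial roots in the coefficients $a_p(\vec\varepsilon)$, whose convergence itself has to be extracted from the hypotheses (e.g.\ via Cauchy estimates, which again rest on exactly the uniform convergence you establish), and the control of the correction factor $1+(k-k_0)\tilde h$; it also makes transparent that $\delta$ can be taken arbitrarily small with $\varepsilon_0'$ chosen afterwards. What the paper's factorization buys in exchange is a more structural picture: the perturbed zeros sit near the roots of the explicit degree-$d$ polynomial $P(\cdot,\vec\varepsilon)$, which could in principle be tracked quantitatively. Both arguments hinge on the same reading of the hypothesis, namely that the continuity in $\vec\varepsilon$ is uniform in $k$, and you are right to flag that passage (together with $g(k,0)=(k-k_0)^d$) as the one step to write out carefully.
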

 \begin{proof}
Since $g$ is holomorphic, we have the Taylor expansion
 $$
   g(k,\vec\varepsilon)
 = \sum_{p=0}^\infty a_p(\varepsilon)(k-k_0)^p
 = P(k,\vec\varepsilon)+ (k - k_0)^{d+1} h(k,\vec\varepsilon)
 = P(k,\vec\varepsilon) [1+ (k-k_0)\tilde h(k,\vec\varepsilon)]\,,
 $$
where $P(k,\vec\varepsilon)$ is a polynom of order $d$ in the
variable $k$, furthermore, $\lim_{\vec\varepsilon\to 0}
h(k,\vec\varepsilon) = 0$ and $\lim_{\vec\varepsilon\to 0} \tilde
h(k,\vec\varepsilon) = \lim_{\vec\varepsilon\to 0} (k - k_0)^d
h(k,\vec\varepsilon)/P(k,\vec\varepsilon) = 0$. Due to the
fundamental theorem of algebra $P(k,\vec\varepsilon)$ has $d$
zeros, not necessarily different, whose distance from $k_0$
depends continuously on $\vec\varepsilon$. On the other hand, we
have $\forall \delta\,\exists \varepsilon_0' : \forall
\vec\varepsilon \in\mathcal{U}_{\varepsilon_0'}(0), \forall
k\in\mathcal{U}_R(k_0): |\tilde h(k,\vec\varepsilon)|<\delta$ in
view of the above limit relations; choosing then $\delta<1/R$ we
can conclude that zeros of the term $[1+ (k-k_0)\tilde
h(k,\vec\varepsilon)]$ lie outside the ball $\mathcal{U}_R(k_0)$.
 \end{proof}

The following lemma slightly generalizes the result to a larger
class of $g(k,\vec \varepsilon)$.
 \begin{lemma}\label{rat-lemma2}
Let $(k,\vec \varepsilon)\mapsto F(k,\vec \varepsilon) :
\mathbb{C}\times \mathbb{R}^m \to \mathbb{C}$ be a function
uniformly continuous in $\vec \varepsilon$ for all
$\vec\varepsilon \in \mathcal{U}_{\varepsilon_0}(0)$ and $k \in
\mathcal{U}_R(k_0)$, $\varepsilon_0>0, \,R>0$, and holomorphic in
$k$ in $\mathcal{U}_R(k_0)$ for all $\vec\varepsilon \in
\mathcal{U}_{\varepsilon_0}(0)$. Suppose that $F(k,\vec 0)$ has in
$\mathcal{U}_R(k_0)$ a single zero of multiplicity $d$ at the
point $k_0$; then there exist such $\delta>0$ and $\varepsilon_0'>0$ that
for all $\vec\varepsilon \in \mathcal{U}_{\varepsilon_0'}(0)$ the
sum of the multiplicities of zeros of $F(k,\vec\varepsilon)$ in
$\mathcal{U}_\delta (k_0)$ is equal to $d$.
 \end{lemma}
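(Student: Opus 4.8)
The plan is to reduce the statement to Lemma~\ref{rat-lemma1} by factoring out the non-vanishing part of $F(\cdot,\vec 0)$. Since $F(k,\vec 0)$ is holomorphic on $\mathcal{U}_R(k_0)$ and has there a single zero, of multiplicity exactly $d$, located at $k_0$, the function $G(k):=F(k,\vec 0)/(k-k_0)^d$ extends holomorphically to all of $\mathcal{U}_R(k_0)$ and is nowhere zero there: away from $k_0$ because $F(\cdot,\vec 0)$ has no other zeros, and at $k_0$ because the order of the zero is precisely $d$. Choosing then any $R'<R$ with $\overline{\mathcal{U}_{R'}(k_0)}\subset\mathcal{U}_R(k_0)$, continuity of $G$ on this compact set gives $m_G:=\inf_{\overline{\mathcal{U}_{R'}(k_0)}}|G|>0$ and a finite supremum as well.

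Next I would put $g(k,\vec\varepsilon):=F(k,\vec\varepsilon)/G(k)$ on $\mathcal{U}_{R'}(k_0)\times\mathcal{U}_{\varepsilon_0}(0)$ and verify that it fulfils the hypotheses of Lemma~\ref{rat-lemma1} with $R'$ in place of $R$. It is holomorphic in $k$ there because $G$ is holomorphic and zero-free; it is uniformly continuous in $\vec\varepsilon$ because $|g(k,\vec\varepsilon)-g(k,\vec\varepsilon')|\le m_G^{-1}|F(k,\vec\varepsilon)-F(k,\vec\varepsilon')|$ and $F$ is uniformly continuous in $\vec\varepsilon$; and, using that continuity of $F$ in $\vec\varepsilon$ yields $F(k,\vec\varepsilon)\to F(k,\vec 0)$ as $\vec\varepsilon\to 0$, one gets $\lim_{\vec\varepsilon\to 0}g(k,\vec\varepsilon)=F(k,\vec 0)/G(k)=(k-k_0)^d$. (Uniformity of this last limit on $\mathcal{U}_{R'}(k_0)$, if one reads it that way in Lemma~\ref{rat-lemma1}, follows just as easily, since the uniform-continuity hypothesis makes $|F(k,\vec\varepsilon)-F(k,\vec 0)|$ small uniformly in $k$ and dividing by $G$ only costs the factor $m_G^{-1}$.) Lemma~\ref{rat-lemma1} then provides $\delta>0$ and $\varepsilon_0'>0$ such that for every $\vec\varepsilon\in\mathcal{U}_{\varepsilon_0'}(0)$ the zeros of $g(\cdot,\vec\varepsilon)$ in $\mathcal{U}_\delta(k_0)$ have multiplicities summing to $d$.

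Finally I would transport this back to $F$. After possibly shrinking $\delta$ so that $\mathcal{U}_\delta(k_0)\subset\mathcal{U}_{R'}(k_0)$, the identity $F(k,\vec\varepsilon)=G(k)\,g(k,\vec\varepsilon)$ with $G$ holomorphic and nowhere vanishing on $\mathcal{U}_{R'}(k_0)$ shows that $F(\cdot,\vec\varepsilon)$ and $g(\cdot,\vec\varepsilon)$ have exactly the same zeros in $\mathcal{U}_\delta(k_0)$ with the same multiplicities, since multiplication by a local unit does not change the order of vanishing. Hence the sum of the multiplicities of the zeros of $F(\cdot,\vec\varepsilon)$ in $\mathcal{U}_\delta(k_0)$ equals $d$, which is the assertion.

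The only genuinely delicate point I anticipate is the first paragraph, namely making sure that "single zero of multiplicity $d$ at $k_0$" really does let $G$ be simultaneously holomorphic and bounded away from zero on a full neighbourhood of $k_0$; after that the argument is a routine reduction, the zero-free factor $G$ serving merely to absorb the possibly awkward $\vec\varepsilon=0$ normalization of $F$ into something harmless. As an alternative avoiding Lemma~\ref{rat-lemma1}, one could argue directly by the argument principle: $\frac{1}{2\pi i}\oint_{|k-k_0|=\delta}\partial_k F(k,\vec\varepsilon)/F(k,\vec\varepsilon)\,\mathrm{d}k$ counts the zeros with multiplicity, is integer-valued, and, once $\delta$ is fixed so small that $F(\cdot,\vec 0)$ has no zero on the circle $|k-k_0|=\delta$ (whence, by uniform continuity, neither does $F(\cdot,\vec\varepsilon)$ for small $\vec\varepsilon$), depends continuously on $\vec\varepsilon$, hence is locally constant and equal to its value $d$ at $\vec\varepsilon=0$.
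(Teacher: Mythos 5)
Your proof is correct and follows essentially the same route as the paper: both reduce the claim to Lemma~\ref{rat-lemma1} by dividing $F$ by a holomorphic, nowhere-vanishing factor so that the quotient tends to $(k-k_0)^d$ as $\vec\varepsilon\to 0$, and then note that a non-vanishing factor contributes no zeros. Your version, which splits off the $\vec\varepsilon$-independent unit $G(k)=F(k,\vec 0)/(k-k_0)^d$ on a slightly smaller disc, is a more carefully stated form of the paper's factorization $F(k,\vec\varepsilon)=(k-k_0)^d f(k,\vec\varepsilon)$, and the argument-principle variant you mention at the end is a sound independent alternative.
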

 \begin{proof}
In view of the holomorphy of $F$ and the fact that $F$ has a zero
of order $d$ in $k_0$ one has $F(k,\vec\varepsilon) = (k - k_0)^d
f(k,\vec\varepsilon)$, where $\lim_{\vec\varepsilon \to 0}
f(k,\vec\varepsilon) \not = 0$. Because $f$ is continuous in
$\vec\varepsilon$ we have $f(k,\vec\varepsilon) \not = 0$ for all
$\vec\varepsilon \in\mathcal{U}_{\varepsilon_0'}(0)$,
$k\in\mathcal{U}_R(k_0)$. Hence $f$ does not contribute to zeros
of $F$ in $\mathcal{U}_R(k_0)$ and Lemma~\ref{rat-lemma1} can be
used.
 \end{proof}

This conclusion allows us to demonstrate the indicated result. Our
aim is to determine the number of resolvent poles, multiplicity
counting, of the quantum graph with perturbed edge lengths in the
neighbourhood of an original pole of multiplicity $d$. In
particular, we want to find out whether the number of solutions of
the condition (\ref{rat-res}) --- into which we substitute from
(\ref{rat-efu}) --- changes in the neighbourhood of $k_0$. In the
notation of the previous lemma, the function $F$ is given by the
\emph{lhs} of (\ref{rat-res}) and the vector $\vec\varepsilon$
describes the change of the edge lengths.
 \begin{theorem}
Let $\Gamma$ be a quantum graph with $N$ finite edges of the
lengths $l_i$, $\:M$ infinite edges, and the coupling described by
the matrix $U = \left(\begin{array}{cc}U_1 &U_2 \\ U_3 & U_4
\end{array}\right)$, where $U_4$ corresponds to the coupling between
the infinite edges. Let $k_0$ satisfy $\mathrm{det}\,
[(1-k_0)U_4 - (1+k_0) I]\not = 0$ and let $k_0$ be a pole of the resolvent $(H -
\lambda \,\mathrm{id})^{-1}$ of a multiplicity $d$. Let
$\Gamma_\varepsilon$ be a geometrically perturbed quantum graph
with the edges of lengths $l_i(1+\varepsilon)$ and the same
coupling as $\Gamma$. Then there exists an $\varepsilon_0 >0$ such that for all
$\vec\varepsilon\in\mathcal{U}_{\varepsilon_0}(0)$ the sum of
multiplicities of the resolvent poles in a sufficiently small
neighbourhood of $k_0$ is $d$.
 \end{theorem}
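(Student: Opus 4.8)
\noindent\emph{Proof strategy.}
The plan is to reduce the perturbed resonance problem to the abstract statement of Lemma~\ref{rat-lemma2}. Using the effective coupling of Section~2.3, I would set
$$
  F(k,\vec\varepsilon) := \mathrm{det}\,[(\tilde U(k)-I)\,C_1(k,\vec\varepsilon)
  + ik\,(\tilde U(k)+I)\,C_2(k,\vec\varepsilon)]\,,
$$
the left-hand side of the resonance condition (\ref{rat-res}) after the bulk coupling $U$ has been replaced by the energy-dependent matrix $\tilde U(k)$ of (\ref{rat-efu}) and the block-diagonal matrices $C_1,C_2$ have been built from the perturbed lengths $l_i(1+\varepsilon_i)$; the diagonal choice $\varepsilon_1=\dots=\varepsilon_N=\varepsilon$ reproduces the situation of the statement. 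By the equivalence established in Section~\ref{equiv} together with the reduction of Section~2.3, for every fixed $\vec\varepsilon$ the zeros of $F(\cdot,\vec\varepsilon)$ in the region where $[(1-k)U_4-(k+1)I]$ is regular are precisely the poles of the analytically continued resolvent of $\Gamma_{\vec\varepsilon}$, counted with multiplicity; in particular $F(\cdot,\vec 0)$ has a zero of order $d$ at $k_0$.

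Next I would verify the hypotheses of Lemma~\ref{rat-lemma2}. For holomorphy of $F(\cdot,\vec\varepsilon)$ in some disc $\mathcal{U}_R(k_0)$: the entries of $C_1,C_2$ are entire in $k$, and the only possible singularity of $\tilde U(k)$ comes from the factor $[(1-k)U_4-(k+1)I]^{-1}$; the hypothesis $\mathrm{det}\,[(1-k_0)U_4-(k_0+1)I]\neq0$ and continuity of the determinant guarantee that this factor is holomorphic on a small enough $\mathcal{U}_R(k_0)$, and a determinant is a polynomial in its entries. For uniform continuity in $\vec\varepsilon$: the vector $\vec\varepsilon$ enters $F$ only through the smooth functions $\sin\frac{k l_j(1+\varepsilon_j)}{2}$ and $\cos\frac{k l_j(1+\varepsilon_j)}{2}$, which are jointly continuous and hence uniformly continuous on the compact set $\overline{\mathcal{U}_R(k_0)}\times\overline{\mathcal{U}_{\varepsilon_0}(0)}$, and so is the polynomial combination $F$. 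Finally, since $F(\cdot,\vec 0)$ is holomorphic and vanishes to the finite order $d$ at $k_0$ it is not identically zero, so its zeros are isolated and, after possibly shrinking $R$, $k_0$ is the only one in $\mathcal{U}_R(k_0)$.

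With these points in place, Lemma~\ref{rat-lemma2} provides $\delta>0$ and $\varepsilon_0'>0$ such that for every $\vec\varepsilon\in\mathcal{U}_{\varepsilon_0'}(0)$ the sum of multiplicities of the zeros of $F(\cdot,\vec\varepsilon)$ in $\mathcal{U}_\delta(k_0)$ equals $d$. Translating back through the correspondence of the first paragraph --- which remains valid for small $\vec\varepsilon$ since $[(1-k)U_4-(k+1)I]$ stays regular on $\mathcal{U}_\delta(k_0)$ --- the sum of multiplicities of the resolvent poles of $\Gamma_{\vec\varepsilon}$ in $\mathcal{U}_\delta(k_0)$ is $d$, which is the assertion with $\varepsilon_0:=\varepsilon_0'$.

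The step I expect to be the real obstacle is the identification invoked in the first paragraph: that the order of a resolvent pole matches the order of the corresponding zero of $F$. Morally this is the content of Section~\ref{equiv}, but one must check that the exterior-complex-scaling argument there yields equality of \emph{multiplicities}, not merely of locations, and that the passage to the effective coupling $\tilde U(k)$ of Section~2.3 does not change them --- concretely, the full and reduced resonance determinants differ by the factor $\mathrm{det}\,[(1-k)U_4-(k+1)I]$, which is nonzero near $k_0$, so the elimination of the exterior data is a locally invertible operation in exactly the regime we work in. The remaining verifications (holomorphy, uniform continuity, isolation of the zero, and the fact that $F(\cdot,\vec\varepsilon)\not\equiv0$ for small $\vec\varepsilon$, which is already built into the proof of Lemma~\ref{rat-lemma1}) are routine.
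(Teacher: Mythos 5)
Your proposal follows essentially the same route as the paper's proof: rewrite the resonance condition with the effective coupling $\tilde U(k)$ as $F(k,\vec\varepsilon)=0$, use $\mathrm{det}\,[(1-k_0)U_4-(1+k_0)I]\neq 0$ to obtain holomorphy of $F$ in $k$ near $k_0$ and uniform continuity in $\vec\varepsilon$, and then apply Lemma~\ref{rat-lemma2}. Your additional remark identifying the multiplicity of the resolvent pole with the order of the zero of the reduced determinant (the full and reduced determinants differing near $k_0$ only by the nonvanishing factor coming from $\mathrm{det}\,[(1-k)U_4-(k+1)I]$) addresses a point the paper leaves implicit, and does so correctly.
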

 \begin{proof}
One can rewrite the condition (\ref{rat-con-l2}) for poles of the
resolvent into the form $F(k,\vec\varepsilon) = 0$, where
$\vec\varepsilon$ is the vector of differences of the lengths of
the internal edges. Using the form of the matrices $D_1(k)$ and
$D_2(k)$ and Eq.~(\ref{rat-efu}) one can easily check that if
$\mathrm{det}\,[(1-k_0)U_4 - (1+k_0) I]\not = 0$ then there exists a
neighbourhood $U_R (k_0)$ where $F(k_0,\vec\varepsilon)$ is
holomorphic in $k$ and uniformly continuous in $\vec\varepsilon$,
hence Lemma~\ref{rat-lemma2} can be applied.
 \end{proof}

Notice that the condition $\mathrm{det}\,[(1-k_0)U_4
- (1+k_0) I]\not = 0$ is automatically satisfied for $k_0\in
\mathbb{R}^{+}$ because of the inequality $|(k_0+1)/(k_0-1)|>1$ and
the fact that the eigenvalues of $U_4$ do not exceed one in
modulus.

\section*{Acknowledgments}
The research was supported by the Czech Ministry of Education, Youth and Sports within the project LC06002. We thank the referee for suggestions which helped to improve the text.

\section*{References}

\end{document}